\documentclass[journal,12pt,onecolumn]{IEEEtran}
\usepackage{graphics}
\usepackage{graphicx}
\usepackage{epsfig}
\usepackage{amsmath,amssymb}
\usepackage{mathrsfs,amsfonts,fancyhdr}
\usepackage{multirow}
\usepackage{multicol}
\usepackage{cite}
\usepackage{amsthm}
\newtheorem{thm}{Theorem}[]
\newtheorem{cor}{Corollary}[]
\newtheorem{lem}{Lemma}[]
\newtheorem{mydef}{Definition}[]

\theoremstyle{remark}
\newtheorem{rem}{Remark}[]



\setlength{\abovecaptionskip}{0pt}
\setlength{\intextsep}{0pt}

\linespread{1.7}

\begin{document}

\title{{\LARGE Spectral Efficiency and Outage Performance for Hybrid D$2$D-Infrastructure Uplink Cooperation}}

\author{\IEEEauthorblockN{Ahmad Abu Al Haija and Mai Vu}
\thanks{Ahmad Abu Al Haija is with the department of Electrical and
  Computer Engineering, McGill University, Montreal, Canada
(e-mail: ahmad.abualhaija@mail.mcgill.ca).This work is performed while he
  visits Tufts University.

  Mai Vu is with the department of Electrical and
  Computer Engineering, Tufts University, Medford, MA, USA (e-mail:
  maivu@ece.tufts.edu).}}

\maketitle
\vspace{-15mm}
\begin{abstract}
We propose a time-division uplink transmission scheme that is applicable to future cellular systems by introducing  hybrid device-to-device (D$2$D)
and infrastructure cooperation. We analyze its spectral efficiency and outage performance and show that compared to existing frequency-division schemes,  the proposed scheme achieves the same or better  spectral efficiency and outage performance while having simpler signaling
and shorter decoding delay. Using time-division, the proposed scheme divides  each transmission frame into  three phases with variable
durations. The two user equipments (UEs)  partially exchange their information in the
first two phases, then cooperatively transmit to the base station (BS) in the
third phase.  We further formulate its common and individual outage probabilities, taking into account outages at both UEs
and the BS. We analyze this outage performance in Rayleigh fading environment assuming full channel state information (CSI) at the
receivers and limited CSI at the transmitters. Results show that comparing to non-cooperative transmission,
the proposed cooperation always improves the instantaneous
achievable rate region even under half-duplex transmission. Moreover, as the received
signal-to-noise ratio increases, this uplink cooperation significantly reduces overall outage probabilities and achieves the full diversity order  in spite of additional
outages at the UEs. These characteristics of the proposed uplink cooperation make
 it appealing for deployment in future cellular networks.

Index terms: cooperative D$2$D, capacity analysis, outage analysis,
half-duplex transmission.
\end{abstract}

\IEEEpeerreviewmaketitle

\section{Introduction}\label{sec:intro}
The escalating growth of wireless networks accompanied with their multimedia services motivates system designers to deploy new technologies that efficiently utilize the wireless spectrum. Since efficiency per link has been approaching  the theoretical limit  for legacy cellular network standards including $2^{\text{nd}}$ and $3^{\text{rd}}$ generations ($2$G and $3$G) \cite{ANJV}, many advanced techniques are proposed  for next generation wireless network standards, Long Term Evolution Advanced  (LTE) and LTE-advance (LTE-A), to improve the spectral efficiency of cellular networks. These techniques  include multi-cell processing \cite{mcell}, heterogeneous network deployment and device-to-device (D$2$D) communication \cite{ANJV}.

In multi-cell processing \cite{mcell}, base stations (BSs) of different cells utilize the backhaul network connecting them to exchange the channel state information (CSI) or the data of their users. Then, they can perform interference coordination or MIMO cooperation (as in coordinated multipoint (CoMP) transmission) to improve their downlink transmissions. For the uplink transmission, it is of interest to study the cooperation among the user equipments (UEs) by utilizing the D$2$D mode in addition to the infrastructure mode. D$2$D communication allows two close UEs to perform direct communication \cite{DTD}. Such D$2$D communication has many applications including  cellular offloading \cite{D2DS3}, video dissemination \cite{D2DS6} and smart city applications \cite{D2DS2}. However, few works have considered hybrid D$2$D and infrastructure cooperation. The main reason is that such  hybrid cooperative transmission is still not
mature enough for inclusion in  specific standards for practical implementation \cite{DTD}. This paper considers uplink user cooperation,
in which two UEs cooperatively transmit to a BS, and analyzes  its spectral efficiency and outage performance.
\begin{figure}[t]
    \begin{center}
    \includegraphics[width=67mm]{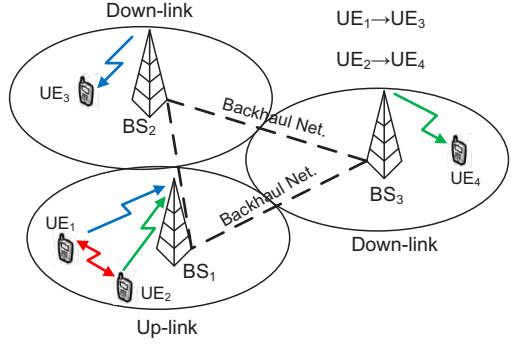}
    \caption{D$2$D cooperation in uplink communication.} \label{fig:system_model}
    \end{center}
\vspace*{-7mm}
\end{figure}
\subsection{A Motivating Example}
Consider the cellular network shown in Figure \ref{fig:system_model} where user equipment $1$ ($\text{UE}_1$) wishes to communicate with $\text{UE}_3$ and $\text{UE}_2$ wishes to communicate with $\text{UE}_4$. This example is valid for both homogeneous and heterogeneous networks as $\text{UE}_1,$  $\text{UE}_2$) and $\text{BS}_1$ can belong to a macro or femto cell.

In the current cellular networks and LTE-A standards \cite{D2DN}, resource partitioning (RP) is used where each user equipment (UE) is given a resource block for its transmission to the base station (BS). The resource blocks are orthogonal in order to reduce the interference as shown in Figure \ref{fig:LTEMOD}. While orthogonal transmission simplifies the signal design at UEs and the decoding at the BS, it poorly utilizes the available spectrum which limits the achievable throughput. Here, the proximity between $\text{UE}_1$ and $\text{UE}_2$ may lead to  strong channel links between the UEs. Hence, adding a D$2$D phase appears as a valuable technique for the two UEs to cooperate in order to improve their throughput to $\text{BS}_1$. Different from pure D$2$D where one UE aims to send information to another UE, in hybrid cooperative transmission, the two UEs have different final destinations but choose to cooperate to help each other send information to the BS.

 Instead of resource partitioning, these UEs can cooperatively transmit to the BS in the same resource blocks, provided that they have exchanged their information beforehand. Such cooperation can be carried out with advanced signal processing at the UEs and/or BS and can significantly improve the spectral efficiency  and outage performance, even when the resource blocks spent for information exchange between the UEs are taken into account.
Existing results have shown that spectral efficiency can be  improved with concurrent transmission where $\text{UE}_1$ and $\text{UE}_2$ transmit concurrently using  the whole
spectrum and the BS decodes using successive interference cancelation (SIC) as in the multiple access channel (MAC) \cite{hansen1977mfr}.
The spectral efficiency can be further improved when $\text{UE}_1$ and $\text{UE}_2$ cooperate to send their information to the BS  by exchanging their information and perform coherent transmission (beamforming) to the BS. Such cooperative transmission requires
advanced processing  at the UEs and the BS as rate splitting and superposition coding are required at the UEs while joint decoding is required at
the BS \cite{Erkip, SErkip}. Thanks to modern computational capability, these advanced processing now appears feasible for upcoming cellular systems. In this paper, we will show that such hybrid D$2$D-infrastructure cooperation can improve not only the spectral efficiency but also the reliability performance in wireless fading channels.

%
\subsection{Literature Review}
%
In \cite{fcc2005fof}, a cooperative channel is first modeled as a multiple access channel with generalized
feedback (MAC-GF) and a full-duplex information-theoretic coding scheme is proposed. This scheme has block Markov signaling where the transmit information
in two consecutive blocks is correlated and employs backward decoding where the BS starts decoding from the last block.
This scheme is adapted to half-duplex transmission using code-division multiple-access (CDMA) \cite{Erkip},  FDMA  \cite{Mes7} and orthogonal FDMA (OFDMA) \cite{SErkip}. The schemes in \cite{Erkip, SErkip} have long delay because of
backward decoding while the scheme in \cite{Mes7} has one block delay because of sliding window decoding but has a smaller rate region because of the specific implementation. In the CDMA scheme,  generating orthogonal codes becomes more complicated for a large number of users.

Whereas existing works on cooperative transmission have been focusing on a frequency-division (FD) implementation \cite{SErkip, Mes7}, this
 paper analyzes a time-division (TD) alternative and show that the TD implementation can achieve the same or better spectral efficiency as the FD
 implementation while having simpler transmit signals  and shorter decoding delay. In this paper, similar to \cite{Erkip, SErkip, Mes7}, we consider two UEs as a basic unit
 of cooperation. It should be noted, however, that extension to $m$ UEs in the uplink transmission is also possible where group of
 UE pairs or all UEs cooperate to send their information.

Further to analyzing the spectral efficiency, we also analyze the outage performance of the cooperative scheme. Outage performance has not been considered in the literature for either cooperative TD or FD implementation. So far, outage has only been considered for the non-cooperative settings. For the non-cooperative MAC, there exist individual and common outages as defined in \cite{outand}. Assuming CSI at the transmitters, the optimal power allocations are derived to minimize the outage capacity. In \cite{NaR}, closed form expressions are derived for the common and individual outages of the two-user MAC assuming no CSI at the transmitters. The
diversity gain region is defined  in \cite{DGRBC} and derived for the  MIMO fading broadcast channel and the MAC using error exponent analysis
in \cite{EEMAC}. The outage probabilities for different relaying techniques in the relay channel have been studied in \cite{Lantse, alouR}. No results so far exist, however, on outage for cooperative multiple access transmission.
In this paper, we analyze the outage performance of the TD cooperative
 transmissions. Since there is no outage performance available for exiting cooperative FD schemes, we also extend our analysis to these schemes
 in order to compare the outage performance.
\subsection{Main Results and Contribution}
In this paper, we propose a TD cooperative hybrid D$2$D-infrastructure transmission
scheme for uplink multiple access communication that can be applied in future cellular systems, derive its achievable rate region and analyze its outage performance over Rayleigh fading channel.  Comparing with the FD schemes
 in \cite{SErkip, Mes7}, the proposed scheme has the same or better rate region, and better outage performance with
simpler signaling and shorter decoding delay. This work is different from our previous work
 \cite{haijatcom}, in which we optimized the power allocation for maximum
 spectral efficiency of a fixed channel, but did not show the ML
 decoding analysis nor consider fading channels and outage analysis. Comparing with our previous scheme in \cite{haivu3}, the proposed scheme achieves the same rate region although it is simpler as it has less splitting for each UE information.

The proposed scheme sends independent information in each transmission block such that the decoding at the end of each block is
possible. To satisfy the
half-duplex constraint, time division is used where each transmission block is divided into $3$
phases. The first two phases are for information exchange between the two UEs, and the last phase is for cooperative transmission to the BS. While the BS is always in the receive mode, the two
UEs alternatively transmit and receive during the first $2$ phases and
coherently transmit during the last phase. The decoding at the BS is performed using
joint maximum-likelihood (ML) receiver among the $3$ phases.

We consider a single antenna at both UEs and the BS but the results can be extended to the MIMO case. We consider block fading channel where all links remain constant over each transmission block and independently vary
in the next block. We assume full  CSI at the receiver side with limited CSI at
the transmitter side where, as in \cite{Erkip}, each UE knows the phase
of its channel to the BS such that the two UEs can employ
coherent transmission. Moreover, each UE knows the relative order between its cooperative and direct links which enables it to cooperate
 when its cooperative link is stronger than the direct link.

 We formulate and analyze both common and individual outages and extend the results in \cite{myglob1} by comparing with the existing RP and frequency division schemes. The individual outage pertains to  incorrect decoding of
 one user information regardless of the other user
 information, while the common outage pertains to incorrect decoding  of
 either user information or both. Because of the information exchanging phases
 transmission, the outage analysis must also consider the outages at the
 UEs. The rate splitting and superposition coding structure also complicates outage analysis and requires dependent analysis of the outage for different information parts. We further derive the outage probabilities for existing FD implementations in \cite{Mes7, SErkip} and compare with our TD implementation.
  Results show that as the received SNR increases, the proposed TD cooperation
 improves outage performance over both orthogonal RP and concurrent non-cooperative transmission schemes in spite of additional outages at the UEs. To the best of our knowledge, this is the
 first work that formulates and analyzes the outage performance for cooperative transmission with rate splitting.
 \subsection{Paper Outline}
 The rest of this paper is organized as follows. Section \ref{sec:system_model} describes the channel model. Section \ref{sec: simpsch} describes the
 proposed time-division cooperative transmission and shows its achievable rate region and the outer bound. Section \ref{sec: outana} formulates and analyzes the common and
 individual outage probabilities of the proposed scheme. Section \ref{sec: regsoutb} formulates the outage probability of existing frequency-division implementations  and compares
 their performance. Section \ref{sec:Num} presents numerical results for the rate region and outage performance of the proposed scheme and existing
 ones. Section \ref{sec:conclusion} concludes the paper.
\section{Channel Model}\label{sec:system_model}
 Consider the uplink communication in Figure \ref{fig:system_model} where $\text{UE}_1$ and $\text{UE}_2$ wish to send their information to $\text{BS}_1$.  In the current LTE-A standard, $\text{BS}_1$ employs resource partitioning (RP) and gives orthogonal resource blocks to the UEs for interference free transmission as shown in Figure \ref{fig:LTEMOD}. However, when $\text{UE}_1$ and $\text{UE}_2$ cooperate to send their information at higher rates, the channel is quite similar to the user cooperative diversity channel defined in \cite{Erkip}. Hence, $\text{BS}_1$ shall change its resource allocation  to facilitate  cooperation and meet the half-duplex constraint in wireless communication where each UE
 can only be either in transmit or receive mode but not in both for the same time and frequency band.


 The proposed transmission scheme uses time division (TD) to satisfy the half-duplex constraint. Instead of dividing the resource block into $2$ orthogonal phases, $\text{BS}_1$ divides the full resource block of $n$ symbols length into $3$ phases with variable durations  $\alpha_1n,$ $\alpha_2n$ and
$(1-\alpha_1-\alpha_2)n$ as shown in Figure \ref{fig:Gausmod}. While $\text{BS}_1$ is always in receiving mode, each UE either transmits or receives during
the first two phases and both of them transmit during the $3^{\text{rd}}$ phase. We consider a single antenna at each UE and the BS but the scheme
can be extended to the MIMO case. Then,
the discrete-time channel model for the half-duplex uplink transmission can be expressed in each phase as follows.
\noindent
\begin{align}\label{recsig}
\text{phase}\;1:&\;\;Y_{12}=h_{12}X_{11}+Z_{12},\;\;Y_{1}=h_{10}X_{11}+Z_{1},\\
\text{phase}\;2:&\;\;Y_{21}=h_{21}X_{22}+Z_{21},\;\;Y_{2}=h_{20}X_{22}+Z_{2},\;\;
\text{phase}\;3:\;\;Y_{3}=h_{10}X_{13}+h_{20}X_{23}+Z_{3},\nonumber
\end{align}
\noindent where $Y_{ij},\;(i,j)\in\{1,2\}$, is the signal received by the $j^{\text{th}}$ UE
during the $i^\text{th}$ phase;  $Y_k,\;k\in\{1,2,3\}$ is the signal received by $\text{BS}_1$ during the $k^\text{th}$ phase; and all the $Z_l,\;l\in\{12,21,1,2,3\}$, are i.i.d complex Gaussian noises with zero mean and unit variance. $X_{11}$ and $X_{13}$ are the signals transmitted from
  $\text{UE}_1$ during the $1^{\text{st}}$ and $3^{\text{rd}}$ phases, respectively. Similarly, $X_{22}$ and $X_{23}$ are the signals transmitted from $\text{UE}_2$
 during the $2^{\text{nd}}$ and $3^{\text{rd}}$ phases.

 Each link coefficient is affected by Rayleigh fading and path loss as follows.
 \begin{align}
 h=\frac{\tilde{h}}{d^{\gamma/2}}
 \end{align}
where $\tilde{h}$ is the small scale fading component and has a complex Gaussian distribution with zero mean and unit variance $({\cal N}(0,1))$. The large scale fading component is captured
 by a path loss model where $d$ is the distance between two nodes in the network and $\gamma$ is the attenuation factor. Let $g=|h|$ and $\theta$ be
 the  amplitude  and the phase of a link coefficient, then $g$ has Rayleigh distribution while $\theta$ has uniform distribution in the interval
 $[0,2\pi]$.

  We assume receiver knowledge for the channel coefficient, i.e., $\text{BS}_1$ knows $h_{10}$ and $h_{20}$, $\text{UE}_1$ knows $h_{21}$ and $\text{UE}_2$ knows $h_{12}$. We further assume that $\text{BS}_1$ knows $h_{21}$  and $h_{12}$ which can be forwarded to $\text{BS}_1$ by $\text{UE}_1$ and $\text{UE}_2$, respectively. Moreover, each UE knows the phase of its direct link to $\text{BS}_1$ and the relative amplitude order between its cooperative and direct links. This information can be obtained through feedback from $\text{BS}_1$ since it knows all channel coefficients. The phase knowledge allows UEs to perform coherent transmission to $\text{BS}_1$ and utilize
  the advantage of beamforming while the relative amplitude orders helps decide the best transmission scenario as shown in Section \ref{sec: regcases}.

   We assume block fading where the channel coefficients stay constant in each block through all $3$ phases and change independently in the next block.
\noindent
\begin{figure}[t]
    \centering
    \begin{minipage}[b]{0.48\linewidth}
    \includegraphics[width=0.90\textwidth]{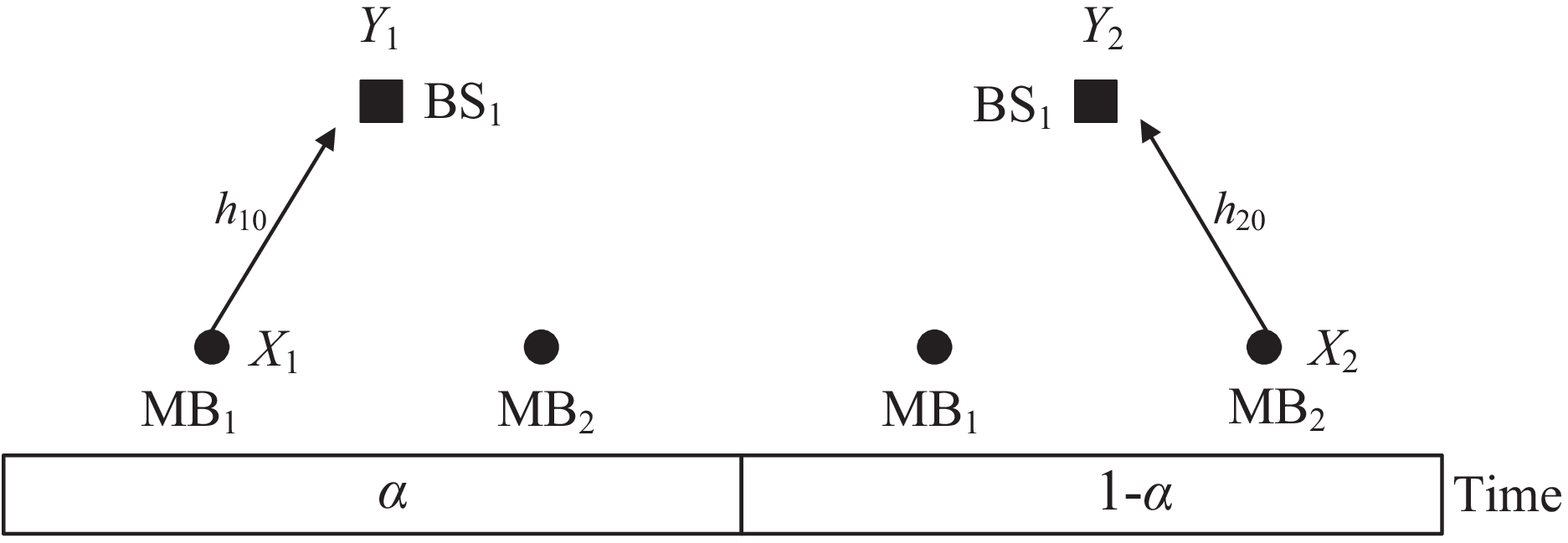}
    \caption{Resource partitioning in current LTE standard.} \label{fig:LTEMOD}
    \end{minipage}
    \hfill
\begin{minipage}[b]{0.48\linewidth}
    \includegraphics[width=0.90\textwidth]{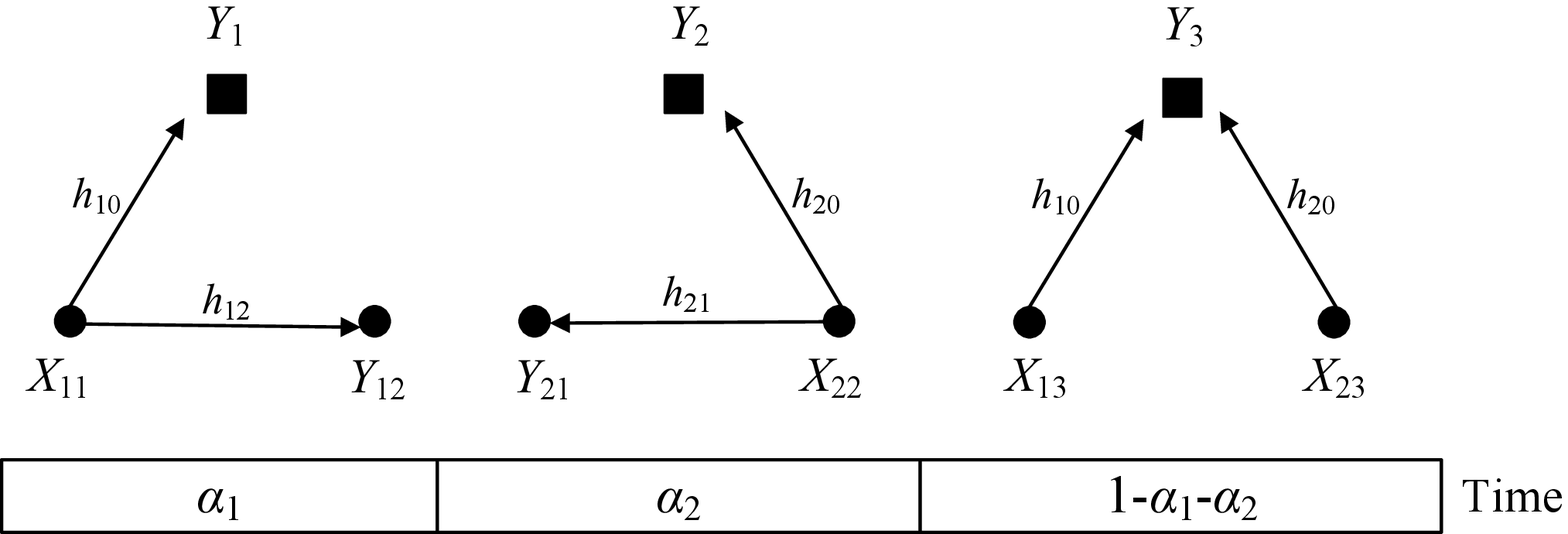}
    \caption{Cooperative uplink transmission.} \label{fig:Gausmod}
    \end{minipage}
\vspace*{-7mm}
\end{figure}
\section{A Time-Division (TD) Uplink Cooperative Device-to-Device Transmission Scheme}\label{sec: simpsch}
Here, we  describe a TD D$2$D cooperative scheme applied to the half-duplex uplink communication in LTE-A networks.  We
also analyze an outer bound and
compare it to the achievable rate region.

Compared with the scheme in \cite{fcc2005fof, Erkip}, the proposed scheme has better spectral efficiency, simpler signaling and shorter decoding delay (no block decoding delay). These characteristics appear since the two UEs transmit among the whole bandwidth, encode independent information in each transmission block and the BS decodes directly at the end of each block instead of backward decoding.  The proposed scheme is based on rate splitting, superposition coding and partial decode-forward (PDF) relaying  techniques.

 The transmission in each block is divided into three phases with relative durations  $\alpha_1,$ $\alpha_2$ and $\alpha_3=1-\alpha_1-\alpha_2$. In each block, $\text{UE}_1$ splits its information
into two parts: a cooperative part with index $i$ and a private part with index $j$. It sends the private part directly to the BS at rate $R_{10}$ and sends the cooperative part to the BS in cooperation with $\text{UE}_2$ at rate $R_{12}$. These parts are then encoded using superposition coding, in which for each transmit sequence of the first information
part, a group of sequences is generated for the second information parts. Similarly, $\text{UE}_2$ splits its information into a cooperative part (indexed by $k)$ and a private part (indexed by $l)$ and encodes them using superposition coding.
 In the first two phases, the two UEs exchange the cooperative information parts. In the $3^{\text{rd}}$ phase, each UE sends both cooperative
 information parts and
 its own private part to the BS. Effectively each UE performs PDF relaying of the cooperative part of
 the other UE. Next, we describe in detail the transmit signaling and ML receiver.
 \subsection{Transmit Signals}
 \subsubsection{Transmit sequences generation}
 As in all communication systems, the channel encoder maps each piece of input information into a unique sequence. This sequence includes some controlled redundancy of the input information which can be used by the receiver to alleviate the noise encountered during transmission to reduce decoding error.

 Let ${\cal I }$ $({\cal K })$ and ${\cal J}$ $({\cal L})$  be the sets of signal indices for the cooperative and private parts of
  $\text{UE}_1$ $(\text{UE}_2)$, respectively.
Since the transmission is affected by Gaussian noise as in (\ref{recsig}), both UEs employ Gaussian signaling to maximize the transmission rate \cite{hsce612}. The Gaussian signals are generated as follows. For each element $i\in{\cal I}$, independently generate a signal vector (sequence)  $\boldsymbol{u}_{1,i}$ of length $n$ according to a Gaussian distribution with zero mean and unit variance. This sequence will be scaled by a power allocated by  $\text{UE}_1$ as shown in Section \ref{sec:sign}.
   Similar Gaussian sequences $\boldsymbol{u}_{2,k}$ and $\boldsymbol{u}_{3,i,k}$ are generated for each element $k\in{\cal K}$ and each pair $(i,k)$, respectively. Next, perform
  superposition signaling where for each sequence $\boldsymbol{u}_{3,i,k}$, generate
  a Gaussian sequence $\boldsymbol{x}_{13,j,i,k}$ $(\boldsymbol{x}_{23,l,i,k})$ for each $j\in{\cal J}$ $(l\in{\cal L})$.
  The superposition coding reduces the decoding complexity and increases the rate region as
 shown in Section \ref{MLD}.
\begin{figure}[t]
    \begin{center}
    \includegraphics[width=85mm]{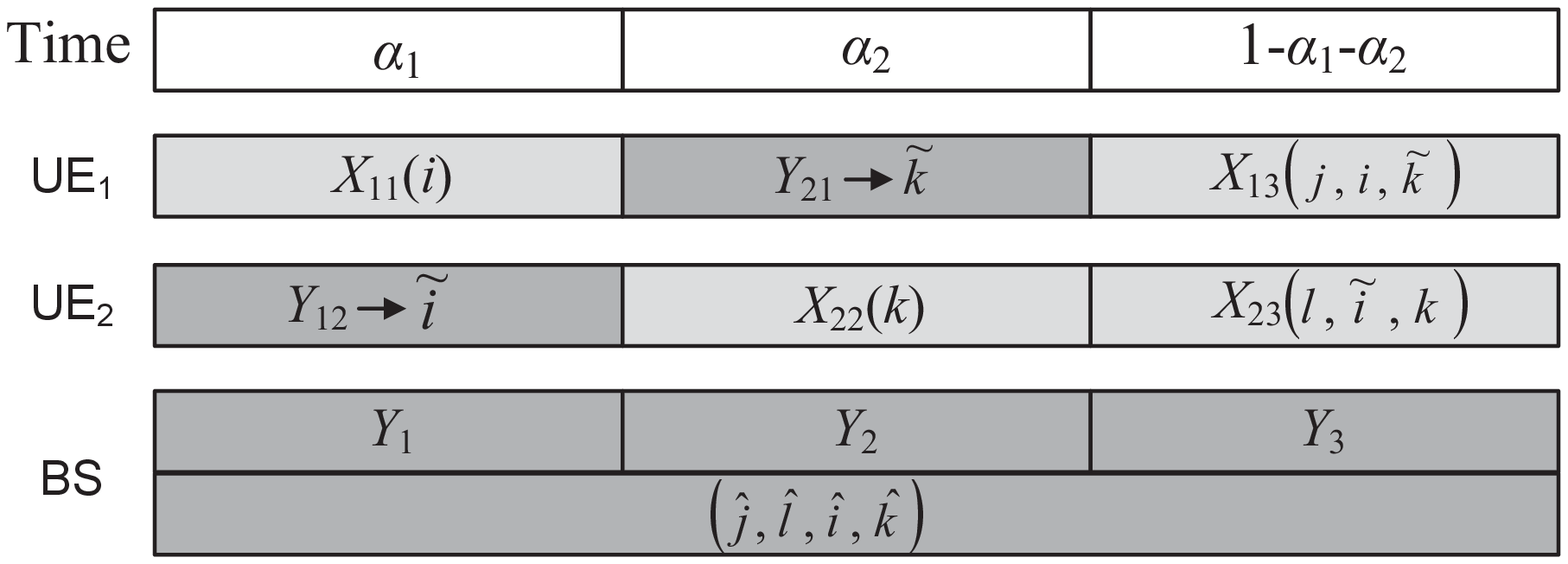}
    \caption{Cooperative uplink transmission scheme (light shade$=$transmit signal, dark shade$=$received signal and decoded indices). } \label{fig:trans}
    \end{center}
 \vspace*{-7mm}
\end{figure}
\subsubsection{Transmission scheme}\label{sec:sign}
 In the $1^{\text{st}}$ phase,  $\text{UE}_1$ sends its cooperative information at rate $R_{12}$ by transmitting the signal $X_{11,i}$ which consists of the first $\alpha_1n$ elements of a scaled sequence of $\boldsymbol{u}_{1,i}$ as shown in (\ref{sigtr}). By the end of the $1^{\text{st}}$ phase,  $\text{UE}_2$ decodes $X_{11,i}$. Then, in the $3^{\text{rd}}$ phase,  $\text{UE}_1$ sends its private information and both cooperative information  at  rate triplet $(R_{10},R_{12},R_{21})$ by transmitting the signal $X_{13,j,i,k},$ which consists of the last $\alpha_3n$ elements of sequence $\boldsymbol{x}_{13,j,i,k}$. Similarly,  $\text{UE}_2$ transmit the signals $X_{22,k}$ and $X_{23,l,i,k}$ in the $2^{\text{nd}}$and $3^{\text{rd}}$, respectively. Since both UEs
know indices $i$ and $k$ in this phase, they can perform coherent transmission of these cooperative information by transmit beamforming such that the achievable
rates of both UEs are increased.
The transmit signals at each phase are
\noindent
\begin{align}\label{sigtr}
\text{phase}\;1:&\;X_{11,i}=\sqrt{\rho_{11}}U_{1}(i),\;\;\;
\text{phase}\;2:\;X_{22,k}=\sqrt{\rho_{22}}U_{2}(k),\\
\text{phase}\;3:&\;X_{13,j,i,k}=\sqrt{\rho_{10}}V_{1}(j)+\sqrt{\rho_{13}}U_3(i,k),\;\;
X_{23,l,i,k}=\sqrt{\rho_{20}}V_{2}(l)+\sqrt{\rho_{23}}U_3(i,k)\nonumber
\end{align}
\noindent where $U_{1},U_{2},V_{1},V_{2}$ and $U_3$ are independent and identically distributed Gaussian signals with zero mean and unit variance, $X_{13}$ and $X_{23}$ are superpositioned in $U_3$. Here, $\rho_{11},\rho_{22},\rho_{10}$ and $\rho_{20}$ are the transmission powers allocated for signals $U_{1},U_{2},V_{1}$ and $V_{2}$, respectively,
$\rho_{13}$ and $\rho_{23}$ are the transmission powers allocated for signal $U_3$ by  $\text{UE}_1$ and  $\text{UE}_2$, respectively. Let ${\cal P}_1$ and ${\cal P}_2$ be the total transmission power for  $\text{UE}_1$ and  $\text{UE}_2$,
respectively. Then, we have the following  power constraints:
\noindent
\begin{align}\label{powcsch2}
\alpha_1\rho_{11}+\alpha_3(\rho_{10}+\rho_{13})&={\cal P}_1,\;\;\;
\alpha_2\rho_{22}+\alpha_3(\rho_{20}+\rho_{23})={\cal P}_2.
\end{align}
\subsection{ML receiver}\label{MLD}
 Assume that all sequences in any set ${\cal I},$ ${\cal J},$ ${\cal K},$ or ${\cal L},$ have equal transmission probability. Sequence maximum likelihood (ML) criterion is then optimal and achieves the same performance as maximum a posterior probability (MAP) criterion.
\subsubsection*{At each UE}
In the $1^{\text{st}}$ phase,  $\text{UE}_2$ detects $i$ from $Y_{12}$ using sequence maximum likelihood (ML) criterion. Hence, for a given sequence $\boldsymbol{y}_{12}$ of length $\alpha_1n$,  $\text{UE}_2$ chooses $\boldsymbol{\hat{x}}_{11,\hat{i}}$ to be the transmitted sequence if
\begin{align}\label{decr1}
p(\boldsymbol{y}_{12}|\boldsymbol{\hat{x}}_{11,\hat{i}})\geq p(\boldsymbol{y}_{12}|\boldsymbol{x}_{11,i}),\;\;\text{for all}\;\;\boldsymbol{x}_{11}(i)\neq\boldsymbol{\hat{x}}_{11}(\hat{i})
\end{align}
  $\text{UE}_1$ applies similar decoding rule in the $2^{\text{nd}}$ phase. Hence,  $\text{UE}_1$ and  $\text{UE}_2$ can reliably detect the
  transmit sequences $\boldsymbol{x}_{11,i}$ and $\boldsymbol{x}_{22,k}$, respectively, if
\begin{align}\label{spr12}
R_{12}\leq&\; \alpha_1\log\left(1+g_{12}^2\rho_{11}\right)=J_1\;\;
\text{and}\;\; R_{21}\leq\;\alpha_2\log\left(1+g_{21}^2\rho_{22}\right)=J_2.
\end{align}
\subsubsection*{At the base station}
The BS utilizes the received signals in all three phases $(Y_1,Y_2,Y_3)$ to
jointly detect all information parts $(j,l,i,k)$  using joint sequence ML criterion. With the signaling in (\ref{sigtr}), the received signals at the BS are given as follows.
\begin{align}\label{bsml}
\text{Phase 1:}&\;Y_1=h_{10}\sqrt{\rho_{11}}U_1(i)+Z_1,\;\;\text{Phase 2:}\;Y_2=h_{20}\sqrt{\rho_{11}}U_2(k)+Z_2,\nonumber\\
\text{Phase 3:}&\;Y_3=h_{10}\sqrt{\rho_{10}}V_1(j)+h_{20}\sqrt{\rho_{20}}V_2(l)+\left(h_{10}\sqrt{\rho_{13}}+h_{20}\sqrt{\rho_{23}}\right)U_3(i,k)+Z_3,
\end{align}
Then,  for given received sequences $\boldsymbol{y}_1$ of length $\alpha_1n,$ $\boldsymbol{y}_2$ of length $\alpha_2n$ and $\boldsymbol{y}_3$ of length $\alpha_3n,$ the BS chooses $\hat{\boldsymbol{x}}_{11,\hat{i}},$ $\hat{\boldsymbol{x}}_{22,\hat{k}},$ $\hat{\boldsymbol{x}}_{10,\hat{j},\hat{i},\hat{k}}$
and $\hat{\boldsymbol{x}}_{20,\hat{l},\hat{i},\hat{k}}$   to be the transmitted
sequences if:
\begin{align}\label{decr2}
&P(\boldsymbol{y}_1|\hat{\boldsymbol{x}}_{11,\hat{i}})P(\boldsymbol{y}_2|\hat{\boldsymbol{x}}_{22,\hat{k}})
P(\boldsymbol{y}_3|\hat{\boldsymbol{x}}_{10,\hat{j},\hat{i},\hat{k}},\hat{\boldsymbol{x}}_{20,\hat{l},\hat{i},\hat{k}})\geq
P(\boldsymbol{y}_1|\boldsymbol{x}_{11,i})P(\boldsymbol{y}_2|\boldsymbol{x}_{22,k})P(\boldsymbol{y}_3|\boldsymbol
{x}_{10,j,i,k},\boldsymbol{x}_{20,l,i,k})\nonumber\\
\text{for all}&\;\boldsymbol{x}_{11,i}\neq \hat{\boldsymbol{x}}_{11,\hat{i}},\;
 \boldsymbol{x}_{22,k}\neq \hat{\boldsymbol{x}}_{22,\hat{k}},\;
 \boldsymbol{x}_{10,j,i,k}\neq \hat{\boldsymbol{x}}_{10,\hat{j},\hat{i},\hat{k}}
 \;\;\text{and}\;\;
 \boldsymbol{x}_{20,l,i,k}\neq \hat{\boldsymbol{x}}_{20,\hat{l},\hat{i},\hat{k}}
\end{align}
\begin{lem}
For each channel realization, the rate constraints that ensure vanishing decoding error probabilities at the BS are given as
\begin{align}\label{sprd}
R_{10}\leq&\; \alpha_3\log\left(1+g_{10}^2\rho_{10}\right)=J_3,\;\;\;
R_{20}\leq\; \alpha_3\log\left(1+g_{20}^2\rho_{20}\right)=J_4\\
R_{10}+R_{20}\leq&\;\alpha_3\log\left(1+g_{10}^2\rho_{10}+g_{20}^2\rho_{20}\right)=J_5\nonumber\\
R_1+R_{20}\leq&\; \alpha_1\log\left(1+g_{10}^2\rho_{11}\right)\;+\alpha_3\zeta=J_6,\;\;
R_{10}+R_2\leq\;\alpha_2\log\left(1+g_{20}^2\rho_{22}\right)+\alpha_3\zeta=J_7\nonumber\\
R_1+R_2\leq&\; \alpha_1\log\left(1+g_{10}^2\rho_{11}\right)+\alpha_2\log\left(1+g_{20}^2\rho_{22}\right)+\alpha_3\zeta=J_8,\nonumber\\
\zeta=&\;\log\big(1+g_{10}^2\rho_{10}+g_{20}^2\rho_{20}+(g_{10}\sqrt{\rho_{13}}+g_{20}\sqrt{\rho_{23}})^2\big).\nonumber
\end{align}
\end{lem}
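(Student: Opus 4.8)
The plan is to run the standard joint random‑coding/ML error analysis for the three‑phase channel in (\ref{bsml}) using the Gaussian codebooks already fixed in Section \ref{sec:sign}, and to extract one rate inequality from each class of decoding‑error events. Since all indices are equiprobable and the codebooks are i.i.d.\ Gaussian, the joint sequence‑ML rule of (\ref{decr2}) is optimal; by the symmetry of the random code we may condition on a fixed transmitted tuple $(i,k,j,l)$ and bound the probability that some competing tuple $(\hat i,\hat k,\hat j,\hat l)\neq(i,k,j,l)$ is at least as likely under the product law $P(\boldsymbol y_1|\cdot)P(\boldsymbol y_2|\cdot)P(\boldsymbol y_3|\cdot,\cdot)$. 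The channel coefficients are known at $\text{BS}_1$, so every exponent obtained is a deterministic function of the realization. What must be shown is that the codebook‑averaged error probability at the BS vanishes as $n\to\infty$ exactly when (\ref{sprd}) holds; errors at the two UEs are already controlled by (\ref{spr12}), so we may assume the UEs hold the cooperative indices they are meant to relay and focus on the BS.

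\textbf{Classifying the error events.} I would partition the error event according to which of the four index components is decoded incorrectly and use the structure of (\ref{sigtr}) to decide which received blocks carry a \emph{fresh} codeword under each competitor. Because $\boldsymbol u_{1,i}$ depends only on $i$, an error in $i$ resamples $U_1$ in $\boldsymbol y_1$; because $\boldsymbol u_{3,i,k}$ depends on $(i,k)$ and the phase‑$3$ signals of both UEs superimpose $U_3$ on their private parts, an error in $i$ or in $k$ also resamples $U_3$ in $\boldsymbol y_3$, with the coherently combined amplitude $g_{10}\sqrt{\rho_{13}}+g_{20}\sqrt{\rho_{23}}$ that is the origin of the $\zeta$ term; an error in $j$ (resp.\ $l$) alone resamples only $V_1$ (resp.\ $V_2$) in $\boldsymbol y_3$. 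Enumerating the nonempty subsets of $\{i,k,j,l\}$, the classes that yield the active constraints are: $j$ wrong only; $l$ wrong only; $\{j,l\}$ wrong with $i,k$ correct; $i$ wrong with $k$ correct (which also makes $j$ fresh, so the carried rate is $R_1=R_{10}+R_{12}$) together with $l$ wrong; the symmetric case $k$ wrong and $j$ wrong (carried rate $R_2=R_{20}+R_{21}$); and $\{i,k\}$ both wrong. These six classes produce, respectively, the six inequalities $R_{10}\le J_3$, $R_{20}\le J_4$, $R_{10}+R_{20}\le J_5$, $R_1+R_{20}\le J_6$, $R_{10}+R_2\le J_7$, and $R_1+R_2\le J_8$ of (\ref{sprd}).

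\textbf{Bounding each class.} For a single competitor in a given class I would use the elementary Gaussian pairwise‑error estimate: conditioned on the realization, the likelihood‑ratio test between two codewords that differ only through fresh Gaussian components in a subset of phases has error probability decaying like $2^{-n\sum_m \alpha_m\log(1+\mathrm{SNR}_m)}$, the sum running over the phases $m$ in which the codeword changed, with $\mathrm{SNR}_m$ the effective received power read directly from (\ref{bsml}) --- $g_{10}^2\rho_{11}$ in phase $1$, $g_{20}^2\rho_{22}$ in phase $2$, and in phase $3$ the sum of the relevant terms among $g_{10}^2\rho_{10}$, $g_{20}^2\rho_{20}$, $(g_{10}\sqrt{\rho_{13}}+g_{20}\sqrt{\rho_{23}})^2$, which reduces to the full $\zeta$ precisely when $V_1$, $V_2$ and $U_3$ are all fresh. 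Multiplying by the number of competitors in the class, $2^{n(\text{sum of the rates indexed by the wrong components})}$, and applying the union bound shows that the contribution of each class vanishes whenever the corresponding inequality in (\ref{sprd}) holds strictly; letting $n\to\infty$ with a standard expurgation/time‑sharing argument then establishes the lemma. Equivalently, one may obtain (\ref{sprd}) by specializing the known achievable region of the MAC with generalized feedback \cite{fcc2005fof} to a fixed channel with the Gaussian inputs and the time/power split of (\ref{sigtr})--(\ref{powcsch2}) and eliminating the split rates by Fourier--Motzkin.

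\textbf{Main obstacle.} The delicate point is the bookkeeping forced by the superposition‑plus‑beamforming structure: one must track exactly which of $\boldsymbol y_1,\boldsymbol y_2,\boldsymbol y_3$ contains a resampled codeword in each error class --- in particular that an error in a cooperative index automatically refreshes that UE's private codeword and the shared sequence $U_3$, so that the full rate $R_1$ (not merely $R_{12}$) and the combined‑power term $\zeta$ enter together --- and one must check that every remaining wrong‑index pattern gives an inequality already implied by the six listed ones, so that (\ref{sprd}) is indeed the full set of binding constraints. Once the fresh‑randomness accounting and the coherent‑combining gain are pinned down, the remaining steps are the routine Gaussian change‑of‑codeword computation and the usual limiting argument.
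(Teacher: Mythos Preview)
Your proposal is correct and follows essentially the same route as the paper: partition the BS error event by which of $(i,k,j,l)$ are decoded wrongly, use the superposition/beamforming structure of (\ref{sigtr}) to determine which signal components are ``fresh'' in each class (in particular that any cooperative error refreshes $U_3$ and hence brings in the full $\zeta$ and the full user rate), and extract one rate inequality per class. The paper executes the per‑class bound via the Gallager random‑coding exponent machinery in Appendix~A (the $s=\tfrac{1}{1+\rho}$ bound, then differentiating $\Psi(\rho,\cdot)$ at $\rho=0$), whereas you sketch a pairwise‑error/union‑bound version; both are standard and yield the same mutual‑information thresholds, so the difference is purely technical.
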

Hence, the BS can reliably decode all information parts if the constraints in (\ref{sprd}) are satisfied.
Note that the terms $J_6, J_7$ and $J_8$ show the advantage of beamforming resulted from coherent transmission of $(i,k)$ from both UEs in the
$3^{\text{rd}}$ phase.
\begin{proof}[Sketch of the proof]
A decoding error can occur for the cooperative or the private parts or both. However, because of superposition coding, if either cooperative part is incorrectly decoded, both private parts will also be decoded incorrectly. Hence, we consider two cases:
\begin{enumerate}
  \item The cooperative parts are decoded correctly:\\
  When both cooperative parts have been decoded correctly, the BS can decode the private parts from $Y_3$ in (\ref{bsml}) after removing $U_3(i,k)$. Then, $Y_3$ becomes similar to the received signal in a MAC. Hence, the rate constraints for the private parts are similar to those of a MAC as given by $J_3,$ $J_4,$ and $J_5$ in (\ref{sprd}).
  \item Either cooperative part or both are decoded incorrectly:\\
  This case contains three sub-cases: only one of the two cooperative parts is decoded correctly, or both are decoded incorrectly, each of which leads to a different rate constraint. If the BS decodes $i$ incorrectly but decodes $k$ correctly, then both $j$ and $l$ will be decoded incorrectly. Because of the joint decoding performed at the BS as in (\ref{decr2}), this incorrect decoding will result in a constraint on the total rate of the parts $i,$ $j$ and $l$. Since
  $i$ is sent in phases $1$ and $3$, this rate constraint is obtained from $Y_1$ and $Y_3$ as follows:
  \begin{align}\label{onecow}
  R_{12}+R_{10}+R_{20}\leq \alpha_1\log\left(1+\text{SNR}_1\right)+\alpha_3\log\left(1+\text{SNR}_3\right),
  \end{align}
  where $\text{SNR}_t,$ $t\in\{1,2,3\}$ is the SNR of all the received signals (due to BS joint decoding) in phase $t$. Thus  (\ref{onecow}) reduces
  to $J_6$ in (\ref{sprd}) where $\text{SNR}_3$ is obtained over all  incorrectly decoded signal parts $(V_1,V_2,U_3)$ in $Y_3$. Similarly, we can
  obtain $J_7$  if the BS decodes $k$ incorrectly but decodes $i$ correctly.  If the BS decodes both $i$ and $k$ incorrectly, then all messages
  parts $i, k, j$ and  $l$ will be in error and we obtain the following rate constraint:
   \begin{align}\label{onetrcow}
  R_{12}+R_{10}+R_{21}+R_{20}\leq \alpha_1\log\left(1+\text{SNR}_1\right)+\alpha_2\log\left(1+\text{SNR}_2\right)+\alpha_3\log\left(1+\text{SNR}_3\right),
  \end{align}
which results in constraint $J_8$ in (\ref{sprd}).
\end{enumerate}
A full analysis based on ML decoding can be found in Appendix A.
\end{proof}
We note that the achievable rate region in (\ref{sprd}) is a direct result of the joint ML decoding performed at the BS simultaneously over all three phases
as in (\ref{decr2}). If the BS uses sequential decoding or decodes each phase separately, this can reduce the decoding complexity but will result
in a strictly smaller rate region.
\vspace{-5mm}
\subsection{Achievable Rate Region and Transmission Scenarios}\label{sec: regcases}
The achievable rate region in terms of $R_1=R_{10}+R_{12}$ and $R_2=R_{20}+R_{21}$ is given as follows.
\vspace{-2 mm}
\begin{thm}\label{nathr1}
The achievable rate region resulting from the proposed scheme for each channel realization consists of rate pairs $(R_1,R_2)$ satisfying the following constraints:
\noindent
\begin{align}\label{th1Grrsimp}
R_1\leq&\; J_1+J_3,\;\;R_2\leq\; J_2+J_4,\;\;
R_1+R_2\leq\; J_1+J_2+J_5,\;\;R_1+R_2\leq\; J_8,
\end{align}
\noindent for some $\alpha_1\geq0,\;\alpha_2\geq 0,$ $\alpha_1+\alpha_2\leq1$ and power allocation set $(\rho_{10},\rho_{20},\rho_{11},\rho_{22},\rho_{13},\rho_{23})$ satisfying (\ref{powcsch2}) where
$J_1$---$J_8$ are given in (\ref{spr12}) and (\ref{sprd}).
\end{thm}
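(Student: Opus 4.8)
The plan is to prove Theorem~\ref{nathr1} by Fourier--Motzkin elimination (FME) of the rate-splitting variables from the single-letter constraints already in hand: the two information-exchange bounds $R_{12}\le J_1$, $R_{21}\le J_2$ of (\ref{spr12}), the six base-station bounds $J_3$--$J_8$ of (\ref{sprd}), the non-negativity constraints $R_{10},R_{12},R_{20},R_{21}\ge 0$, and the identities $R_1=R_{10}+R_{12}$, $R_2=R_{20}+R_{21}$. For fixed $(\alpha_1,\alpha_2)$ and a fixed power split obeying (\ref{powcsch2}) these define a polytope in $(R_{10},R_{12},R_{20},R_{21})$ whose projection onto the $(R_1,R_2)$-plane is exactly the set of rate pairs achievable with that parameter choice, and the region of (\ref{th1Grrsimp}) is the union of these projections over all admissible parameters. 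So the whole proof is a projection computation followed by a redundancy check.

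First I would pass to the variables $(R_1,R_2,R_{10},R_{20})$ via $R_{12}=R_1-R_{10}$ and $R_{21}=R_2-R_{20}$: non-negativity of $R_{12},R_{21}$ becomes $0\le R_{10}\le R_1$, $0\le R_{20}\le R_2$; (\ref{spr12}) becomes $R_{10}\ge R_1-J_1$, $R_{20}\ge R_2-J_2$; in (\ref{sprd}) the bound $J_8$ is already written in $(R_1,R_2)$ and passes through unchanged, while $J_6,J_7$ become $R_{20}\le J_6-R_1$ and $R_{10}\le J_7-R_2$. Then I would eliminate $R_{10}$, pairing its lower bounds $\{0,R_1-J_1\}$ against its upper bounds $\{R_1,J_3,J_7-R_2,J_5-R_{20}\}$, and afterwards eliminate $R_{20}$ in the same fashion (lower bounds $\{0,R_2-J_2\}$ against $\{R_2,J_4,J_6-R_1\}$ and the bounds inherited from $J_5$). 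Collecting the surviving inequalities produces the four constraints of (\ref{th1Grrsimp}) --- $R_1\le J_1+J_3$, $R_2\le J_2+J_4$, $R_1+R_2\le J_1+J_2+J_5$, $R_1+R_2\le J_8$ --- together with extras of the types $R_1\le J_1+J_5$, $R_2\le J_2+J_5$, $R_1+R_2\le J_1+J_7$, $R_1+R_2\le J_2+J_6$, $R_1\le J_6$ and $R_2\le J_7$, all of which must be shown redundant.

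The reverse inclusion (every achievable pair satisfies (\ref{th1Grrsimp})) is immediate since those four inequalities are themselves FME outputs, so everything reduces to killing the extras. Monotonicity of $\log(1+x)$ gives $J_3\le J_5$ and $J_4\le J_5$, hence $J_1+J_3\le J_1+J_5$ and $J_2+J_4\le J_2+J_5$, removing the first two; and $\zeta\ge\log(1+g_{10}^2\rho_{10}+g_{20}^2\rho_{20})$ absorbs the phase-3 beamforming gain wherever it appears. The cross-sum extras fall once $J_1+J_7\ge J_8$ and $J_2+J_6\ge J_8$, which by direct subtraction reduce to $g_{12}\ge g_{10}$ and $g_{21}\ge g_{20}$ --- precisely the cooperative branch of Section~\ref{sec: regcases}, where a UE contributes a cooperative part only when that link dominates its direct link --- so $R_1+R_2\le J_8$ subsumes them. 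The step I expect to be the genuine obstacle is disposing of $R_1\le J_6$ and $R_2\le J_7$: these come from the ``exactly one cooperative part decoded in error'' events, are not dominated termwise by any single listed bound, and cannot be removed by pure $\log$-monotonicity. I would attack them by combining $R_1\le J_1+J_3$ (resp. $R_2\le J_2+J_4$) with the per-block power--time budget (\ref{powcsch2}) and the same link-order conditions, or equivalently by side-stepping FME at this point and exhibiting the split directly --- taking $R_{12},R_{21}$ as large as (\ref{spr12}) permits so that $R_{10},R_{20}$ are minimal --- after which the six bounds of (\ref{sprd}) reduce to the link-order conditions above; making this reduction airtight over the entire claimed region, leaning if necessary on the union-over-parameters structure rather than on any single parameter choice, is the heart of the argument.
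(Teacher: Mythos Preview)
Your approach --- Fourier--Motzkin elimination of $R_{10},R_{20}$ from (\ref{spr12}) and (\ref{sprd}) followed by redundancy pruning --- is exactly the paper's one-line ``obtained by combining (\ref{spr12}) and (\ref{sprd})'' made explicit; Appendix~A is almost entirely devoted to the ML error-exponent derivation of the constraints in (\ref{sprd}) themselves (i.e.\ Lemma~1), which you take as already established. Your elimination is in fact more careful than the paper's own treatment: the paper flags only the two sum-rate extras $R_1+R_2\le J_1+J_7$ and $R_1+R_2\le J_2+J_6$ (Corollary~\ref{cor1}), whereas you correctly catch the additional FME outputs $R_1\le J_6$ and $R_2\le J_7$, which the paper never mentions. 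You are right that these two are the genuine obstacle --- they are not dominated termwise by any single inequality of (\ref{th1Grrsimp}) for arbitrary parameters --- and your proposed resolution via the link-order conditions together with the union over $(\alpha,\rho)$ is precisely the mechanism underlying the four operating scenarios of Section~\ref{sec: regcases}: setting $\alpha_1=0$ (resp.\ $\alpha_2=0$) whenever $g_{12}\le g_{10}$ (resp.\ $g_{21}\le g_{20}$) forces $J_7=J_8$ (resp.\ $J_6=J_8$), while in the remaining cooperative cases the link orders give $J_1+J_3\le J_6$ or $J_2+J_4\le J_7$ after one also uses the beamforming gap $\zeta\ge\log(1+g_{10}^2\rho_{10}+g_{20}^2\rho_{20})$. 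So your plan is sound and, if anything, more rigorous than the paper's own argument on the combination step.
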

\begin{proof}
Obtained by combining (\ref{spr12}) and (\ref{sprd}). See Appendix A for more details.
\end{proof}
Combining (\ref{spr12}) and (\ref{sprd}) leads to the constraints in (\ref{th1Grrsimp}) in addition to
$2$ other constraints including $R_1+R_2\leq J_1+J_7$ and $R_1+R_2\leq J_2+J_6$. However, these constraints are redundant as stated in the following corollary:
\begin{cor}\label{cor1}
Two sum rate constraints $(R_1+R_2\leq \min\{(J_1+J_7,\;J_2+J_6)\})$ on the achievable region result from combining (\ref{spr12}) and (\ref{sprd}) are redundant.
\end{cor}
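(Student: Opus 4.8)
The plan is to bypass any Fourier--Motzkin bookkeeping and instead show directly that the two candidate sum-rate inequalities are dominated by the constraint $R_1+R_2\le J_8$ that already appears in (\ref{th1Grrsimp}); concretely, I will prove $J_8\le J_1+J_7$ and $J_8\le J_2+J_6$ for every channel realization and every admissible parameter choice, so that dropping $R_1+R_2\le J_1+J_7$ and $R_1+R_2\le J_2+J_6$ cannot enlarge the region of Theorem~\ref{nathr1}.

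First I would expand the three rates using (\ref{spr12}) and (\ref{sprd}). Writing $J_7=\alpha_2\log(1+g_{20}^2\rho_{22})+\alpha_3\zeta$ and $J_1=\alpha_1\log(1+g_{12}^2\rho_{11})$, while $J_8=\alpha_1\log(1+g_{10}^2\rho_{11})+\alpha_2\log(1+g_{20}^2\rho_{22})+\alpha_3\zeta$, the common terms cancel and the difference collapses to $J_1+J_7-J_8=\alpha_1\big[\log(1+g_{12}^2\rho_{11})-\log(1+g_{10}^2\rho_{11})\big]$. An identical manipulation (swapping the roles of the two UEs and using $J_2=\alpha_2\log(1+g_{21}^2\rho_{22})$, $J_6=\alpha_1\log(1+g_{10}^2\rho_{11})+\alpha_3\zeta$) gives $J_2+J_6-J_8=\alpha_2\big[\log(1+g_{21}^2\rho_{22})-\log(1+g_{20}^2\rho_{22})\big]$.

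Next I would invoke the operational restriction stated in the channel model and made explicit through the transmission scenarios of Section~\ref{sec: regcases}: a UE allocates nonzero time/power to its cooperative stream only when its cooperative link is at least as strong as its direct link, i.e.\ $\alpha_1\rho_{11}>0$ is used only if $g_{12}\ge g_{10}$, and $\alpha_2\rho_{22}>0$ only if $g_{21}\ge g_{20}$; otherwise that UE transmits directly and the corresponding $\alpha_1$- or $\alpha_2$-term is zero. In either case the bracketed quantities above are nonnegative by monotonicity of $\log$, so $J_8\le J_1+J_7$ and $J_8\le J_2+J_6$, hence $J_8\le\min\{J_1+J_7,\,J_2+J_6\}$. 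It then follows that every $(R_1,R_2)$ satisfying (\ref{th1Grrsimp}) already obeys $R_1+R_2\le J_1+J_7$ and $R_1+R_2\le J_2+J_6$, which is exactly the claimed redundancy.

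The main obstacle is that this is \emph{not} a pure inequality among the $J$'s for arbitrary powers: without the scenario restriction the sign of $\alpha_1[\log(1+g_{12}^2\rho_{11})-\log(1+g_{10}^2\rho_{11})]$ is indeterminate, so the argument must be tied to the set of allocations over which the union in Theorem~\ref{nathr1} is actually taken. The delicate point is therefore to state precisely that this union can be restricted to the cooperation scenarios of Section~\ref{sec: regcases}, and to dispose of the degenerate boundary cases $\alpha_1=0$ or $\alpha_2=0$ (equivalently $\rho_{11}=0$ or $\rho_{22}=0$), in which $J_1+J_7=J_8$ or $J_2+J_6=J_8$ holds with equality and the conclusion is immediate.
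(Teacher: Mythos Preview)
Your approach is essentially the paper's: the paper's entire proof is the single assertion that $\min\{J_1+J_7,\,J_2+J_6\}\ge \min\{J_1+J_2+J_5,\,J_8\}$ ``for any channel configuration,'' which is exactly the domination you establish (your inequality $J_8\le\min\{J_1+J_7,J_2+J_6\}$ is in fact the stronger of the two). Your explicit computation of $J_1+J_7-J_8=\alpha_1[\log(1+g_{12}^2\rho_{11})-\log(1+g_{10}^2\rho_{11})]$ is precisely what the paper invokes later when it remarks, in Case~2, that $g_{12}>g_{10}$ and $g_{21}>g_{20}$ imply $J_1+J_7>J_8$ and $J_2+J_6>J_8$.

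Where you go beyond the paper is in flagging that the inequality is \emph{not} a free identity among the $J$'s but depends on restricting to the scenario-consistent allocations of Section~\ref{sec: regcases}. This is a genuine point: for arbitrary powers with $g_{12}<g_{10}$ and $\alpha_1\rho_{11}>0$ one has $J_1+J_7<J_8$, so the paper's one-liner is literally correct only once the union in Theorem~\ref{nathr1} has been reduced to the four sub-schemes. Your identification of this as ``the delicate point'' is well taken; the paper leaves it implicit.
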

\begin{proof}
For any channel configuration, $\min\{(J_1+J_7,\;J_2+J_6)\})\geq \min\{(J_1+J_2+J_5,\;J_8)\})$.
\end{proof}
From the proposed scheme, $4$ optimal sub-schemes can be obtained
depending on the channel configuration. These schemes have different power allocation and phase durations  that are results of the operating scenario. Each UE requires only the relative amplitude order between its cooperative and direct links to determine which scenario to operate. Since the BS knows all links as stated in Section \ref{sec:system_model} and there are $2$ pairs of links, this knowledge can be obtained through a $2$-bit feedback from the BS which incurs negligible overhead. Each bit indicates the relation between one pair of direct and cooperative links. Assume that at the beginning of each transmission block, the UEs have sufficient knowledge of the link orders, the operating scenarios are given as follows.
\subsubsection{Case $1$ $(g_{12}\leq g_{10}$ and $g_{21}\leq g_{20})$, Direct transmissions for both UEs}.

In this case, decoding at the two UEs actually limits the achievable rates because the inter-UE links are weaker than the direct links.
Therefore, both UEs transmit directly to the BS all the time without cooperation as in the concurrent transmission with SIC.  The achievable rate is given in (\ref{th1Grrsimp}) but with $\alpha_1=\alpha_2=0,$ $\rho_{11}=\rho_{13}=\rho_{22}=\rho_{23}=0,$ $\rho_{10}={\cal P}_1$ and
$\rho_{20}={\cal P}_2$.
\subsubsection{Case $2$ $(g_{12}>g_{10}$ and $g_{21}>g_{20})$, Cooperation for both UEs}
In this case, both UEs obtain mutual benefit from cooperation for sending 
 their information to the BS. When $g_{12}>g_{10}$ and $g_{21}>g_{20}$,
 $J_2+J_6>J_8,$ and $J_1+J_7>J_8$. Therefore, the rate constraints are as given in (\ref{th1Grrsimp})  with all signals and phases.
\subsubsection{Case $3$ $(g_{12}>g_{10}$ and $g_{21}\leq g_{20})$, Cooperation for  $\text{UE}_1$ and direct transmission for  $\text{UE}_2$}
Here,  $\text{UE}_1$  prefers cooperation while  $\text{UE}_2$ transmits directly to the BS. Therefore, the transmission is carried over $2$ phases only where $\text{UE}_2$ relays information for $\text{UE}_1$ while also transmitting its own information.  $\text{UE}_1$
sends its cooperative part in the $1^{\text{st}}$ phase. In the $2^{\text{nd}}$ phase,  $\text{UE}_1$ sends its two parts while  $\text{UE}_2$ sends
its full information and the cooperative part of  $\text{UE}_1$. The achievable rate is given in (\ref{th1Grrsimp})
with $\alpha_2=0,$ and $\rho_{22}=0$.
\subsubsection{Case $4$ $(g_{12}\leq g_{10}$ and $g_{21}> g_{20})$, Cooperation for  $\text{UE}_2$ and direct transmission for  $\text{UE}_1$}
This case is the opposite of the Case $3$ where the achievable rate is given in (\ref{th1Grrsimp}) with $\alpha_1=0,$ and $\rho_{11}=0$.
\subsection{Outer Bound}\label{sec:out. cap}
In this section, we provide an outer bound with constraints similar to that in Theorem $1$.
During the $3^{\text{rd}}$ phase, the channel looks like a MAC with common message \cite{haykin2005crb} while during the first two phases, it looks like a broadcast channel (BC). Furthermore, when one UE has no information to send, the channel becomes as the relay channel (RC). Although capacity is known for the MAC with common message and for the Gaussian BC, the capacity for RC
 is unknown in general. In \cite{dbout}, an outer bound is derived for the full-duplex scheme in \cite{Erkip} based on the idea of
 dependence balance \cite{ray4}. When applied to the proposed half-duplex transmission, the outer bound holds without dependence balance
 condition as follows.
 \begin{cor}\label{outcorG}\cite{dbout}
An outer bound for the uplink half-duplex D$2$D communication  consists of all rate pairs $(R_1,R_2)$ satisfying (\ref{th1Grrsimp}) but
replacing $g_{12}^2$ $(g_{21}^2)$ by $g_{12}^2+g_{10}^2$ $(g_{21}^2+g_{20}^2)$ as follows.
\begin{align}\label{outregion}
\!\!\!\!R_1&\leq \alpha_1\log\left(1+(g_{10}^2+g_{12}^2)\rho_{11}\right)+J_3,\;\;R_2\leq \alpha_2\log\left(1+(g_{20}^2+g_{21}^2)\rho_{22}\right)+J_4,\\
\!\!\!\!R_1+R_2&\leq \alpha_1\log\left(1+(g_{10}^2+g_{12}^2)\rho_{11}\right)+\alpha_2\log\left(1+(g_{20}^2+g_{21}^2)\rho_{22}\right)+J_5,\;\;R_1+R_2\leq J_8.\nonumber
\end{align}
\end{cor}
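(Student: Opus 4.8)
The plan is to obtain Corollary~\ref{outcorG} by specializing the dependence‑balance outer bound of \cite{dbout}---originally derived for the full‑duplex user‑cooperation channel of \cite{Erkip}---to the half‑duplex channel (\ref{recsig}), and then showing that the dependence‑balance constraint needed in the full‑duplex case is redundant here. Concretely, I would start from a length‑$n$ code with vanishing error probability and, via Fano's inequality together with $W_1\perp W_2$, bound $nR_1\le I(W_1;Y_1,Y_2,Y_3\mid W_2)+n\epsilon_n$, and symmetrically $nR_2$ and $n(R_1+R_2)$, where $(Y_1,Y_2,Y_3)$ collect the base station's received blocks in the three phases. I would then (i) further reveal to the base station, as a genie, the inter‑UE received block $Y_{12}$ (and $Y_{21}$ for the $R_2$ bound), and (ii) introduce an auxiliary sequence $V$ playing the role of the cooperative information common to both UEs, and expand the mutual informations phase by phase.

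Two structural facts of the half‑duplex scheme drive the bound. First, in phase~$1$ only $\text{UE}_1$ is active and $\text{UE}_2$ listens, so once the genie supplies $Y_{12}$ the combined phase‑$1$ observation $(Y_1,Y_{12})$ is a SIMO channel seeing $X_{11}$ with effective gain $g_{10}^2+g_{12}^2$; this is exactly the replacement $g_{12}^2\to g_{12}^2+g_{10}^2$ (symmetrically $g_{21}^2\to g_{21}^2+g_{20}^2$) in the statement, and it produces the modified $J_1$ (resp. $J_2$). Second, given $(W_2,Y_{12})$ the phase‑$2$ and phase‑$3$ signals of $\text{UE}_2$ are deterministic and can be subtracted from $Y_3$, while conditioning additionally on $V$ strips the beamformed component of $\text{UE}_1$'s phase‑$3$ signal, leaving a point‑to‑point channel carrying only the private power $\rho_{10}$; this yields $J_3$ (and $J_4$ symmetrically) and, for the two private parts jointly, the MAC term $J_5$. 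For the sum‑rate bound $J_8$ I would \emph{not} use the $Y_{12}$ genie (it only loosens the phase‑$1,2$ terms), so the cut‑set bound on $(W_1,W_2)$ gives the unmodified $g_{10}^2\rho_{11}$ and $g_{20}^2\rho_{22}$ terms; the phase‑$3$ term follows from $I(W_1,W_2;Y_3\mid\cdot)\le h(Y_3)-h(Z_3)$ and the Gaussian maximum‑entropy inequality applied to the per‑symbol variance of $Y_3$, where a Cauchy--Schwarz step on the input cross‑correlation---together with the UEs' knowledge of the phases of $h_{10},h_{20}$, so that coherent combining is the worst case---caps that variance by $1+g_{10}^2\rho_{10}+g_{20}^2\rho_{20}+(g_{10}\sqrt{\rho_{13}}+g_{20}\sqrt{\rho_{23}})^2$, whose logarithm is $\zeta$. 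The per‑phase average powers of the code supply $\rho_{11},\rho_{22}$ and the split of the phase‑$3$ powers into $(\rho_{10},\rho_{13})$ and $(\rho_{20},\rho_{23})$ obeying (\ref{powcsch2}), so the resulting region has the union‑over‑power‑allocation form of Theorem~\ref{nathr1}, now with the stated gain replacements, i.e. (\ref{outregion}).

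The step I expect to be the main obstacle is justifying that the dependence‑balance constraint of \cite{dbout} may be dropped, i.e., that the auxiliary $V$ and the induced joint input law can be chosen freely so that the per‑phase power split above is unconstrained. In full duplex a relay's transmitted symbol is correlated with what it is \emph{simultaneously} overhearing, and dependence balance restricts which joint laws of $(V,X_1,X_2)$ are admissible; in the half‑duplex scheme the overhearing is confined to phases~$1$--$2$ (when the intended listener is silent) and the cooperative relaying to phase~$3$, so the Markov relations that dependence balance enforces hold automatically for the per‑phase inputs. Making this precise---identifying $V$ with the overheard cooperative messages, checking that conditioning on $V$ really isolates the private power $\rho_{10}$ in the phase‑$3$ terms of the individual bounds, and verifying that no extra coupling among $(\rho_{10},\rho_{13},\rho_{20},\rho_{23})$ is forced so that all four constraints of (\ref{outregion}) can be met by one admissible allocation---is the crux. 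Once that is done, optimality of Gaussian inputs in each single‑letter term is the standard maximum‑entropy argument, and the remaining algebra reducing the single‑letter region to (\ref{outregion}) is routine and can be carried over from \cite{dbout}.
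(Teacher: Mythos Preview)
Your proposal is correct and follows essentially the same approach as the paper. The paper does not give a self-contained proof; it cites \cite{dbout} and explicitly notes that in the half-duplex setting the dependence-balance condition becomes inactive, then offers an intuitive ``MIMO view'' in which the first three constraints are read off as SIMO (receiver-cooperation) cut-set bounds in phases~1--2 combined with a conditional-on-$V$ MAC term in phase~3, and the fourth constraint $J_8$ as the MISO (transmitter-cooperation) cut-set bound---precisely the genie/cut-set decomposition you outline, including the same reason for \emph{not} augmenting the receiver in the $J_8$ bound.
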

\subsubsection*{MIMO View}
These bounds can also be obtained
using MIMO bounds at receiver and transmitter sides as follows.
%

 Consider the $1^{\text{st}}$ phase,  $\text{UE}_1$ transmits while  $\text{UE}_2$ and the BS receive with full cooperation as in a SIMO $(1\times2)$ channel, this gives the first outer bound on $R_1$  in (\ref{outregion}).  The second bound (on  $R_2$) is obtained in a similar way. For the third bound (on $R_1+R_2$), the $1^{\text{st}}$ and $2^{\text{nd}}$ phases are bounded using a SIMO, similarly to that for $R_1$ and $R_2$, respectively; in the $3^{\text{rd}}$ phase, since we use the SIMO bound at the receiver side, both UEs transmit without cooperation  which results in the term $J_5$.
Finally, the fourth bound (on $R_1+R_2$) is obtained from the MISO bound at the transmitter side: in the $1^{\text{st}}$ phase, only  $\text{UE}_1$ sends and the BS receives given known signal from  $\text{UE}_2$; the same holds for the $2^{\text{nd}}$ phase; in the $3^{\text{rd}}$ phase, both UEs transmit with full cooperation as in a MISO $(2\times 1)$ channel.

Note that the tightness of the outer bound is determined by the ratios $g_{12}^2/g_{10}^2$ and
$g_{21}^2/g_{20}^2$. The outer bound becomes tighter as these two ratios increase since then $g_{12}^2\rightarrow g_{12}^2+g_{10}^2$ and $g_{21}^2\rightarrow g_{21}^2+g_{20}^2$.
In other words, the bound becomes increasingly tight as the inter-UE link qualities increase.
\section{Outage Probability and Outage Rate Region}\label{sec: outana}
The previous analysis provides the region of transmission rates that can be achieved for each fading channel realization.
In most wireless services, however,  a minimum target information rate is required to support the service, below which the service is unsustainable.
For a particular fading realization, the channel may or may not support the target rate. The probability that the rate supported by the fading channel falls below the target rate is called the outage rate probability. Outage has been analyzed for non-cooperation concurrent transmission with SIC (classical MAC) \cite{NaR, outand} but has not been formulated or analyzed in a cooperative setting.

In this section, we formulate and analyze the outage probability of the
 proposed cooperative scheme.
 Suppose that based on the service requirements, the target rate pair is
 $(R_{1},R_{2})$. Outage occurs in the event that the target rate pair lies
 outside the achievable region for a channel realization.
There are two types of outage in multi-user transmission: common and
individual outage \cite{NaR, outand}. The individual outage
for  $\text{UE}_1$ is the probability that the channel cannot support its
transmission rate regardless of whether the channel can or cannot support
the transmission rate of  $\text{UE}_2$. Similar holds for  $\text{UE}_2$. The common
outage is the probability that the channel cannot support the transmission
rate of either  $\text{UE}_1$ or  $\text{UE}_2$ or both.

Unlike the non-cooperative schemes where outage occurs only at the BS, outage in the proposed cooperative scheme can also occur at the
 UEs. Moreover, the outage formulation can be different for each
 channel configuration depending on the specific transmission scheme used
 for that realization as outlined  in the $4$ cases in  Section \ref{sec:
   regcases}.

Define $P_{cm},$ $P_{1m}$ and $P_{2m}$ for $m\in\{1,2,3,4\}$ as the common
and individual outage probabilities for case $m$ as discussed in Section
\ref{sec: regcases}. Then, the outage probability is given as follows.
\begin{thm}\label{outthm}
For the proposed $3$-phase D$2$D uplink scheme, the average common outage probability $(\bar{P}_{c})$
is given as
\begin{align}\label{outtc}
\bar{P}_{c}&=P[g_{12}\leq g_{10}, g_{21}\leq g_{20}]P_{c1}+P[g_{12}> g_{10}, g_{21}> g_{20}]P_{c2}\nonumber\\
&\;\;+P[g_{12}> g_{10}, g_{21}\leq g_{20}]P_{c3}+P[g_{12}\leq g_{10}, g_{21}> g_{20}]P_{c4}.
\end{align}
where $P_{c1}$ and $P_{c4}$ are explained in Sections \ref{seccas1} and \ref{seccas34}, respectively, $P_{c2},$ and $P_{c3}$ are given in
(\ref{outc2}) and (\ref{outc3}), respectively.
The  average individual outage probabilities $(\bar{P}_{1},\bar{P}_2)$ have
similar formulation.
\end{thm}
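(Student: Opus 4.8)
The plan is to obtain (\ref{outtc}) directly from the law of total probability, partitioning the space of channel realizations according to the two link orderings that determine which of the four sub-schemes of Section~\ref{sec: regcases} the UEs run.

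First I would record that the four cases partition the probability space. Since $g_{12},g_{10},g_{21},g_{20}$ are Rayleigh-distributed with continuous densities, $P[g_{12}=g_{10}]=P[g_{21}=g_{20}]=0$, so the events
\begin{align}
\mathcal{E}_1 &= \{g_{12}\le g_{10},\, g_{21}\le g_{20}\}, \quad
\mathcal{E}_2 = \{g_{12}> g_{10},\, g_{21}> g_{20}\}, \nonumber\\
\mathcal{E}_3 &= \{g_{12}> g_{10},\, g_{21}\le g_{20}\}, \quad
\mathcal{E}_4 = \{g_{12}\le g_{10},\, g_{21}> g_{20}\}, \nonumber
\end{align}
are pairwise disjoint and satisfy $\sum_{m=1}^{4}P[\mathcal{E}_m]=1$.

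Next I would argue that, conditioned on $\mathcal{E}_m$, the transmission is exactly sub-scheme $m$. By the $2$-bit feedback described in Section~\ref{sec: regcases}, each UE knows at the start of the block the order of its cooperative versus direct link, so on $\mathcal{E}_m$ both UEs adopt the phase durations and power allocation of case $m$, and the instantaneous achievable region is the case-$m$ specialization of Theorem~\ref{nathr1}. I would then take $P_{cm}$ to be, by definition, the common outage probability conditioned on $\mathcal{E}_m$ --- that is, the conditional probability that the target pair $(R_1,R_2)$ falls outside the case-$m$ region, together with any UE-side decoding failure occurring in whichever D$2$D exchange phase is present in sub-scheme $m$. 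Writing $\mathcal{O}_c$ for the common outage event, the law of total probability gives
\begin{align}
\bar{P}_c = P[\mathcal{O}_c] = \sum_{m=1}^{4}P[\mathcal{E}_m]\,P[\mathcal{O}_c\mid\mathcal{E}_m] = \sum_{m=1}^{4}P[\mathcal{E}_m]\,P_{cm}, \nonumber
\end{align}
which is (\ref{outtc}). Replacing $\mathcal{O}_c$ by the individual outage event of $\text{UE}_1$ (resp. $\text{UE}_2$) and $P_{cm}$ by $P_{1m}$ (resp. $P_{2m}$) yields the stated analogues for $\bar{P}_1$ and $\bar{P}_2$.

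The decomposition itself is essentially definitional; the main obstacle lies in the subsequent explicit evaluation of each $P_{cm}$, carried out in Sections~\ref{seccas1} and~\ref{seccas34} and in (\ref{outc2}) and (\ref{outc3}). For that I would, within each $\mathcal{E}_m$, enumerate the constituent outage events --- the relevant BS-side rate constraints among $J_1$--$J_8$ of Theorem~\ref{nathr1} together with the UE-side constraints $R_{12}\le J_1$ and/or $R_{21}\le J_2$ that become active whenever a D$2$D exchange phase is used --- translate each into a region in the variables $(g_{12},g_{10},g_{21},g_{20})$, and integrate the joint Rayleigh density over the union. The delicate points there are the statistical dependence introduced by the common beamforming term $\zeta$ and by the split rates $R_{10},R_{12}$ (resp. $R_{20},R_{21}$), which couple the private- and cooperative-part outage events, and the degeneration of the expressions in the boundary cases $m=1$ (no D$2$D phase, so the problem reduces to a two-user MAC outage) and $m=4$.
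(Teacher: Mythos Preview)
Your proposal is correct and follows the same approach as the paper: the paper's own proof of Theorem~\ref{outthm} is the single line ``Obtained by formulating the outage probability of each case as in the following sections,'' i.e.\ exactly the law-of-total-probability decomposition over the four link-order events that you spell out. Your write-up is in fact more careful than the paper's, since you explicitly note that the Rayleigh densities make ties measure-zero so the $\mathcal{E}_m$ form a genuine partition, and you correctly identify that all the substantive work is deferred to the case-by-case analyses in Sections~\ref{seccas1}--\ref{seccas34}.
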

\begin{proof}
Obtained by formulating the outage probability of each case as in the following sections. 
\end{proof}
\subsection{Outage Probability for Transmission Case $1$}\label{seccas1}
This case occurs when $g_{12}\leq g_{10}, g_{21}\leq g_{20}$ and it is the same as the classical non-cooperative MAC.
The probability for this case is obtained as follows.
\begin{lem}
The probability for case $1$ is given as
\begin{align}\label{prob1}
P[g_{12}> g_{10}, g_{21}> g_{20}]=\frac{\mu_{10}}{\mu_{12}+\mu_{10}}
\frac{\mu_{20}}{\mu_{21}+\mu_{20}}
\end{align}
where $\mu_{ij}$ is the mean of $g_{ij}^2$ for $i\in\{1,2\}$ and $j\in\{0,1,2\}$.
\end{lem}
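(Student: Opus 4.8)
The plan is to rewrite the event defining case~$1$ as a product of two statistically independent pairwise comparisons of exponential random variables, and then to evaluate each comparison by a short integral.

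First I would record the distributional facts implied by the channel model of Section~\ref{sec:system_model}. Writing $h_{ij}=\tilde h_{ij}/d_{ij}^{\gamma/2}$ with $\tilde h_{ij}$ complex Gaussian of zero mean and unit variance, the squared gain $g_{ij}^2=|h_{ij}|^2$ is exponentially distributed with mean $\mu_{ij}=E[g_{ij}^2]$, i.e.\ with density $f_{g_{ij}^2}(t)=\mu_{ij}^{-1}e^{-t/\mu_{ij}}$ and CDF $1-e^{-t/\mu_{ij}}$ for $t\ge 0$. Moreover, the links $\text{UE}_1\!-\!\text{UE}_2$, $\text{UE}_1\!-\!\text{BS}_1$, $\text{UE}_2\!-\!\text{UE}_1$ and $\text{UE}_2\!-\!\text{BS}_1$ correspond to distinct node pairs undergoing independent fading, so $g_{12}^2$, $g_{10}^2$, $g_{21}^2$ and $g_{20}^2$ are mutually independent.

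Next, since amplitudes are nonnegative, $\{g_{12}\le g_{10}\}=\{g_{12}^2\le g_{10}^2\}$ and $\{g_{21}\le g_{20}\}=\{g_{21}^2\le g_{20}^2\}$, and the first event is measurable with respect to $(g_{12}^2,g_{10}^2)$ while the second is measurable with respect to $(g_{21}^2,g_{20}^2)$. Independence then gives the factorization $P[g_{12}\le g_{10},\,g_{21}\le g_{20}]=P[g_{12}^2\le g_{10}^2]\,P[g_{21}^2\le g_{20}^2]$. I would also note in passing that the event labelling case~$1$ is $\{g_{12}\le g_{10},\,g_{21}\le g_{20}\}$, so the left-hand side of (\ref{prob1}) is meant to carry $\le$ rather than $>$.

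Finally, for each factor I would condition on the direct-link gain and integrate an exponential density against an exponential CDF: $P[g_{12}^2\le g_{10}^2]=\int_0^\infty\bigl(1-e^{-t/\mu_{12}}\bigr)\,\mu_{10}^{-1}e^{-t/\mu_{10}}\,dt=\mu_{10}/(\mu_{12}+\mu_{10})$, and identically $P[g_{21}^2\le g_{20}^2]=\mu_{20}/(\mu_{21}+\mu_{20})$; multiplying yields (\ref{prob1}). There is no real obstacle in this argument — the only steps deserving a sentence of justification are that each per-link gain is exponential with the stated mean and that the two comparisons are independent, both immediate from the model, after which the claim follows from a one-line integral.
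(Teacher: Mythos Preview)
Your proposal is correct and follows essentially the same route as the paper's Appendix~B: factor the event by independence into two pairwise comparisons of exponential variables, then evaluate each factor by integrating the exponential CDF of the inter-UE gain against the exponential density of the direct-link gain. You also correctly flag the $>$ versus $\leq$ typo in the lemma statement, and your evaluation $P[g_{12}^2\le g_{10}^2]=\mu_{10}/(\mu_{12}+\mu_{10})$ is the right value (the paper's Appendix~B has a misprint in its final displayed fraction).
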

\begin{proof}
See Appendix B.
\end{proof}
The common and individual outage probabilities
$(P_{c1},\;P_{11},\;P_{21})$ for this case are defined in
\cite{NaR}. Hence, the outage probabilities for this case are similar to that in \cite{NaR} except that each outage probability is conditioned on the event that $g_{12}\leq g_{10}$ and
$g_{21}\leq g_{20}$.
\subsection{Outage Probability for Transmission Case $2$}
This case applies when $g_{12}> g_{10}, g_{21}> g_{20}$, which allows full
cooperation between the two UEs.
The probability for this case is the same as (\ref{prob1}) but replacing $\mu_{10}$ by $\mu_{12}$ and $\mu_{20}$ by $\mu_{21}$ in the numerator.
In this case, since the two UEs perform rate splitting and partial
decode-forwarding, the target rates $(R_{1},R_{2})$ are split into the
cooperative and private target rates as described in Section \ref{sec:
  simpsch}. Different from the non-cooperative MAC, here outage can occur
at either UE or at the BS. We first analyze outage
probabilities at the UEs and the BS separately, then combine
them to obtain the overall outage probability.
\subsubsection{Outage at the UEs}
As $\text{UE}_1$ has no CSI about $g_{12}$, the transmission rate $R_{12}$ may
exceed $J_1$ in (\ref{spr12}),  which is the maximum rate supported by the
fading channel to $\text{UE}_2$. Therefore, there is a possibility for outage
at $\text{UE}_2$. The outage probability at $\text{UE}_2$ $(P_{m2})$ is given as
\begin{align}\label{outm1m2}
P_{m2}=&P\left[\alpha_2\log(1+g_{12}^2\rho_{11})\leq R_{12}|g_{12}> g_{10}, g_{21}> g_{20}\right]
=P\left[g_{12}^2\leq \frac{2^{\alpha_2R_{12}}-1}{\rho_{11}}|g_{12}> g_{10}\right]
\end{align}
Similar formula holds for the outage probability at $\text{UE}_1$ $(P_{m1})$.
\subsubsection{Outage at the Base Station}
The outage at the BS is considered when there are no outages at
the UEs. This outage is tied directly with the decoding constraints of
the cooperative and private information parts as shown
in (\ref{sprd}). This outage consists of two parts, for the cooperative and
the private information.

Because of the superposition coding structure that each private part is
superimposed on both cooperative parts, an outage for either of the
cooperative information parts leads to an outage for both private
parts. Hence we only need to consider the common outage for the cooperative
parts, but need to consider both the common and individual outage for the
private parts.

\begin{rem}
For the achievable rate region in (\ref{th1Grrsimp}), we look at the combination of (\ref{spr12}) and (\ref{sprd}) and we show in Theorem \ref{cor1} that
two rate constraints $R_1+R_{20}\leq J_6$ and $R_{10}+R_2\leq J_7$ in (\ref{sprd}) are redundant. However, in the outage analysis, we look at the outage at the
UEs and the BS separately. Hence, these $2$ constraints at the BS are active and they affect the outage of the cooperative parts.
\end{rem}
\subsubsection*{Outage of the Cooperative Parts}
From (\ref{sprd}), the rate constraints for the cooperative parts are
\begin{align}\label{ouc2p1}
R_{12}\leq&\; J_6-(R_{10}+R_{20}),\;\;R_{21}\leq\; J_7-(R_{10}+R_{20}),\;\;
R_{12}+R_{21}\leq\; J_8-(R_{10}+R_{20}).
\end{align}
For fixed target rates $(R_{10},R_{12},R_{20}, R_{21})$, a common outage of
the cooperative parts occurs when the cooperative target rate pair
$(R_{12},R_{21})$ lies outside the region obtained from (\ref{ouc2p1}). The
probability of this cooperative common outage is given as
\begin{align}\label{prcc}
\!\!\!\!P_{cc}=1-P\bigg[&R_{12}\leq J_6-(R_{10}+R_{20}),\;R_{21}\leq J_7-(R_{10}+R_{20}),\;R_{12}+R_{21}\leq J_8-(R_{10}+R_{20})|\xi_1\bigg]
\end{align}
where $\xi_1$ is the event that case $2$ happens and there is no outage at the UEs, which is defined as
\begin{align}
\xi_1=\bigg\{&g_{12}>\max\left(\sqrt{\frac{2^{\alpha_2R_{12}}-1}{\rho_{11}}},g_{10}\right),\;\;g_{21}>
\max\left(\sqrt{\frac{2^{\alpha_1R_{21}}-1}{\rho_{22}}},g_{20}\right)\bigg\}
\end{align}
\subsubsection*{Outage of the Private Parts}
For the private parts, the rate constraints obtained from $(\ref{sprd})$ are
\begin{align}\label{prp}
R_{10}\leq J_3,\;R_{20}\leq J_4,\;R_{10}+R_{20}\leq J_5
\end{align}
This region is similar to the classical MAC. Hence, the common $(P_{cp})$
and individual $(P_{1p},P_{2p})$ outage probabilities for private parts
can be obtained as
\begin{align}\label{fornr}
&P_{cp}=P[R_{10}> J_3,\;R_{20}\leq J_5-J_1|\xi_2,\xi_1]
+P[R_{20}> J_4,\;R_{10}\leq J_5-J_2|\xi_2,\xi_1]\\
&\;\;+P[R_{10}\leq J_5-J_2,\;R_{20}> J_5-J_1,\;R_{10}+R_{20}>J_5|\xi_2,\xi_1],\nonumber\\
&P_{1p}=P[R_{10}> J_3,\;R_{20}\leq J_5-J_1|\xi_2,\xi_1]+P[R_{10}\leq J_5-J_2,\;R_{20}> J_5-J_1,\;R_{10}+R_{20}>J_5|\xi_2,\xi_1],\nonumber\\
&P_{2p}=P[R_{20}> J_4,\;R_{10}\leq J_5-J_2|\xi_2,\xi_1]+P[R_{10}\leq J_5-J_2,\;R_{20}> J_5-J_1,\;R_{10}+R_{20}>J_5|\xi_2,\xi_1]\nonumber
\end{align}
where $\xi_2$ is the event that (\ref{ouc2p1}) holds.
\begin{rem}
Although the probabilities in (\ref{fornr}) are in similar form to those
in \cite{NaR}, they are conditional probabilities that depend on the outage event for the common part in (\ref{ouc2p1}). Hence, the formulas in (\ref{fornr}) cannot be evaluated in closed forms as in \cite{NaR}.
\end{rem}
\subsubsection*{Outage at the Base Station}
Since an outage for any cooperative part leads to an outage for both
private information parts, the individual outage at the BS in
(\ref{outc2}) occurs with probability $(P_{b1})$ if the cooperative parts
are in outage or the cooperative parts are decoded correctly but the
private information part of $\text{UE}_1$ is in outage. Similar analysis applies for
$P_{b2}$. The common outage occurs at the BS with probability
$P_{bc}$ if the cooperative parts are in outage or the cooperative parts
are decoded correctly but either or both private parts are in
outage. Hence, we have
\begin{align}\label{pcio}
P_{bc}&=P_{cc}+\bar{P}_{cc}P_{cp},\;\;P_{b1}=P_{cc}+\bar{P}_{cc}P_{1p},\;\;P_{b2}=P_{cc}+\bar{P}_{cc}P_{2p}
\end{align}
where $P_{cc}$ is given in (\ref{prcc}), $\bar{P}_{cc}=1-P_{cc}$ and $P_{cp},\;P_{1p}$ and $P_{2p}$ are given as in (\ref{fornr}).
\subsubsection{Overall Outage for Case $2$}
The outage probability for case $2$ can now be obtained from
(\ref{outm1m2}) and $\eqref{pcio}$ as follows.
Common outage occurs if there is an outage at $\text{UE}_1$, or there is no outage at $\text{UE}_1$ but an outage at $\text{UE}_2$, or there is no
outage at either UE but an outage at the BS. Similar analysis holds for the individual outages. Therefore, the common $(P_{c2})$ and individual
$(P_{12},P_{22})$ outage probabilities become
%
\begin{align}\label{outc2}
P_{c2}=&P_{m1}+\bar{P}_{m1}P_{m2}+\bar{P}_{m1}\bar{P}_{m2}P_{bc},\\
P_{12}=&P_{m1}+\bar{P}_{m1}P_{m2}+\bar{P}_{m1}\bar{P}_{m2}P_{b1},\;\;
P_{22}=P_{m1}+\bar{P}_{m1}P_{m2}+\bar{P}_{m1}\bar{P}_{m2}P_{b2},\nonumber
\end{align}
where $\bar{P}_{m1}=1-P_{m1},$ $\bar{P}_{m2}=1-P_{m2},$ $P_{bc},$ $P_{b1}$ and $P_{b2}$ are the outage probabilities at the BS (\ref{pcio}).
\begin{rem}
Since an outage at either UE will cause an outage of the common information part, and each private information part is superposed on both common parts, UE outages contribute to both the common and private outages overall.
\end{rem}
\subsection{Outage Probability for Transmission Cases $3$ and $4$}\label{seccas34}

This case occurs when $g_{12}> g_{10}, g_{21}\leq
g_{20}$,  which allows one way of cooperation from $\text{UE}_1$ to $\text{UE}_2$. The probability of this case is the same as (\ref{prob1}) but replacing $\mu_{10}$ by $\mu_{12}$ in the numerator.

In this case, only the target rate of $\text{UE}_1$ $(R_{1})$ is divided into
cooperative and private target rates as $R_1=R_{10}+R_{12}$.
The outage probability now depends on the outage probability at
$\text{UE}_2$ and the BS. Since the outage at $\text{UE}_2$ is identical to $P_{m2}$ given in (\ref{outm1m2}), we
only analyze  the outage at the BS for this case.

Similar to Case $2$, the outage at the BS consists of two parts:
cooperative and private outages.
In this case, there is only one cooperative information part with rate
constraint obtained from (\ref{sprd}) as
\begin{align}\label{ouc3p1}
R_{12}\leq&\; J_6-(R_{10}+R_{2}).
\end{align}
Thus, the outage probability for the cooperative part is
\begin{align}\label{ouc3p2}
P_{cr}=P\big[R_{12}> J_6-(R_{10}+R_{2})|\xi_3\big],
\end{align}
where $\xi_3$ is the event that case $3$ happens and there is no outage at $\text{UE}_2$, which is given as
\begin{align}
\xi_3=\left\{g_{12}>\max\left(\sqrt{\frac{2^{\alpha_2R_{12}}-1}{\rho_{11}}},g_{10}\right),\;g_{21}\leq g_{10}\right\}.
\end{align}
For the private parts, the outage probability is similar to Case $2$ but with $\xi_2$ pertains to the event that (\ref{ouc3p1}) holds.  Hence, the common and individual outage probabilities at the BS are given as
\begin{align}\label{pcio3}
P_{bc}&=P_{cr}+\bar{P}_{cr}P_{cp},\;\;
P_{b1}=P_{cr}+\bar{P}_{cr}P_{1p},\;\;
P_{b2}=P_{cr}+\bar{P}_{cr}P_{2p},
\end{align}
where $P_{cp},$ $P_{1p}$ and $P_{2p}$ are given in (\ref{fornr}) with $R_{20}=R_2$ and $\xi_2$ pertains to the event that (\ref{ouc3p1}) holds.
Finally, the overall common $(P_{c3})$ and individual $(P_{13},P_{23})$ outage probabilities for this case are given as
\begin{align}\label{outc3}
P_{c3}=&P_{m2}+\bar{P}_{m2}P_{bc},\;\;
P_{13}=P_{m2}+\bar{P}_{m2}P_{b1},\;\;
P_{23}=P_{m2}+\bar{P}_{m2}P_{b2},
\end{align}
with $P_{bc},$ $P_{b1}$ and $P_{b1}$ as in (\ref{pcio3}).

Case $4$ occurs when $g_{12}\leq g_{10}, g_{21}> g_{20}$
and is simply the opposite of Case $3$.
\subsection{Outage Rate Region}
 The last two subsections provide the formulation and analysis of the
 outage probabilities at a given target rate pair. Some services may
 require target outage probabilities instead of the target rates. For these
 services, we can obtain the individual and common \emph{outage rate
   regions} as follows.
 \begin{mydef} \label{def1}
 For given target outage probabilities $(\beta_1,\beta_2)$, the individual outage rate region of the proposed D$2$D uplink cooperative scheme  consists of all rate pairs $(R_1,R_2)$ such that
 \begin{align}\label{outrr}
 P_1(R_1,R_2,\underline{\rho})&\leq \beta_1,\;\;P_2(R_1,R_2,\underline{\rho})\leq \beta_2
 \end{align}
 where $\underline{\rho}=(\rho_{10},\rho_{20},\rho_{11},\rho_{22},\rho_{13},\rho_{23})$ represents all possible power allocations satisfying the power constraints in (\ref{powcsch2}). $P_1$ and $P_2$ are functions of $(R_1,R_2,\underline{\rho})$ as shown in (\ref{outc2}) and (\ref{outc3}).\\
 Similarly, the common outage rate region consists of all rate pairs $(R_1,R_2)$ such that
 \begin{align}\label{outrr2}
 P_c(R_1,R_2,\underline{\rho})&\leq \min\{\beta_1,\beta_2\}
 \end{align}
with $P_c$ as given in  (\ref{outc2}) and (\ref{outc3}).
 \end{mydef}
\section{Comparison with Frequency Division Schemes}\label{sec: regsoutb}
In this section, we compare the proposed TD scheme with the existing half-duplex schemes  based on FD or CDMA  in \cite{Erkip, Mes7, SErkip}. We show that the proposed scheme achieves the same or better rate region while has simpler transmit signals and significantly shorter decoding delay. Moreover, we formulate the outage probability for the existing schemes as they are unavailable in these prior works.
\subsection{Three-Band Frequency Division}
Based on the original information-theoretic scheme in \cite{fcc2005fof, Erkip}, frequency division can be used in the proposed scheme instead of time division   as proposed. In FD implementation, the bandwidth of each transmission block is divided into $3$ bands and the transmissions in the first $2$ bands are similar to the first $2$ phases in the TD scheme except that both UEs transmit at the same time  (on different frequency bands).
In the $3^{\text{rd}}$ band, both users will transmit concurrently. However, because in the same block of time, the two users are still exchanging current cooperative information on the first $2$ bands, then in the $3^{\text{rd}}$ band, they can only send the previous and not the current cooperative information  as in \cite{fcc2005fof, Erkip}.
Therefore, frequency-division implementation requires block Markov signaling structure which requires backward decoding with long block delay, or sliding window decoding with one block delay. In \cite{SErkip}, a half-duplex cooperative OFDMA system  with N subchannels is proposed where these subchannels are divided into $3$ sets. Considering
these $3$ sets as the $3$ phases of the FD scheme, the transmission and the achievable rate regions in these two schemes are similar.

In comparison, for $3$-band FD and $3$-set OFDMA, the  information dependency  between consecutive blocks complicates the signaling by requiring a block Markov signal structure. The proposed scheme, by using time division, overcomes this block Markov requirement and allows the forwarding of information in the same block.  Moreover, backward or sliding window
decoding is required for FD implementation because of the block Markov structure, which for Gaussian channel leads to the same achievable rate region of the proposed scheme but with at least one block
delay whereas the proposed scheme incurs no block decoding delay. Based on this discussion, we obtain the following corollary:
\begin{cor}\label{cor3b}
The proposed $3$-phase TD scheme achieves the same rate region of the $3$-band FD or the $3$-set OFDMA scheme while having simpler transmit signals  and shorter decoding delay.
\end{cor}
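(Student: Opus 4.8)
The plan is to establish the three assertions of Corollary \ref{cor3b} --- identical rate region, simpler transmit signals, and shorter decoding delay --- separately, the first being the substantive one. For the rate region, I would start from the achievable region of the $3$-band FD (equivalently $3$-set OFDMA) scheme, which is the Gaussian multiple-access channel with generalized feedback of \cite{fcc2005fof, Erkip, SErkip} implemented over three orthogonal frequency resources of fractions $\beta_1,\beta_2,\beta_3=1-\beta_1-\beta_2$ with block Markov coding and backward (or sliding-window) decoding. Writing out its Gaussian evaluation, one obtains constraints on the cooperative and direct rates of exactly the same shape as $J_1,\dots,J_8$ in \eqref{spr12}--\eqref{sprd}, but with the phase durations $\alpha_i$ replaced by the band fractions $\beta_i$ and with the per-band powers in place of the per-phase powers $\rho_{11},\rho_{22},\rho_{10},\rho_{20},\rho_{13},\rho_{23}$.

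The key observation is that orthogonal time-sharing and orthogonal frequency-sharing produce identical rate formulas under a common (block-averaged) power constraint: a sub-channel occupying a fraction $\lambda$ of the block's degrees of freedom (time symbols in TD, frequency bins in FD) and carrying block-averaged power $P$ on a link of gain $g$ contributes rate $\lambda\log\!\big(1+g^2 P/\lambda\big)$ in either case, since the instantaneous SNR on that sub-channel is $g^2P/\lambda$ and it is used over a $\lambda$-fraction of the block. Consequently, identifying $\beta_i\equiv\alpha_i$ and matching the power split through \eqref{powcsch2} (e.g.\ $\beta_1\cdot(\text{band-1 power})=\alpha_1\rho_{11}$, and likewise for the others), every term $J_1,\dots,J_8$ --- including the beamforming gains in $J_6,J_7,J_8$, which arise from the coherent sum $(h_{10}\sqrt{\rho_{13}}+h_{20}\sqrt{\rho_{23}})U_3$ transmitted on the third resource in both schemes --- is reproduced verbatim. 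Hence the FD region is contained in the TD region of Theorem \ref{nathr1}; the reverse inclusion is immediate by running the FD scheme with the matching parameters, so the two regions coincide.

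The point requiring care, and the step I expect to be the main obstacle, is that in FD both UEs transmit simultaneously on bands $1$ and $2$, so the cooperative messages exchanged in block $b$ cannot be relayed until block $b+1$; band $3$ of block $b$ forwards the cooperative messages of block $b-1$. One must verify that this block Markov structure, resolved by backward decoding over $B$ blocks (or sliding-window decoding), loses nothing in steady state relative to the within-block relaying of the proposed TD scheme: in each block the backward decoder recovers the full constraint set $J_1,\dots,J_8$, and the only rate loss is in initialization/termination, which vanishes as $B\to\infty$. This is exactly the argument invoked informally in the discussion preceding the corollary and in \cite{fcc2005fof, Erkip, SErkip}; making it rigorous for the three-resource Gaussian model is the technical core.

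Finally, the signaling and delay claims follow from the structural contrast. The TD scheme encodes statistically independent messages in every block (Section \ref{sec: simpsch}) and forwards cooperative information within the same block, so no block Markov superposition of current and previous codewords is needed at the transmitters, whereas FD is forced into the block Markov structure. For delay, the joint ML decoder \eqref{decr2} of the TD scheme operates on a single received block, giving one-block decoding latency and no extra block delay, while the FD scheme requires backward decoding (latency growing with the number of blocks $B$) or, at best, sliding-window decoding with one additional block of delay. Combining the region equality with these two observations yields the corollary.
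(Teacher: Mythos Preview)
Your proposal is essentially correct and tracks the paper's own (informal) justification, which is the discussion immediately preceding the corollary: the $3$-band FD/OFDMA scheme is the same partial decode-forward architecture implemented over orthogonal frequency resources, forced into block Markov coding with backward (or sliding-window) decoding, and for Gaussian channels this yields the same region as the proposed TD scheme while incurring the extra signaling complexity and decoding delay you describe. Your time--frequency duality identification $\beta_i\equiv\alpha_i$ with matched per-resource powers is exactly the mechanism behind the paper's assertion.

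There is, however, one inaccuracy in your plan that is worth flagging. You write that the FD backward decoder ``recovers the full constraint set $J_1,\dots,J_8$.'' The paper states otherwise (see Section~\ref{sec: regsoutb}, in particular the proof of Corollary~\ref{corTsh} and the discussion in the tradeoff subsection): because the backward decoder already knows the \emph{current} cooperative indices when decoding a block, the error events that generate $J_6$ and $J_7$ do not arise, so the FD scheme has \emph{fewer} BS rate constraints than the TD scheme --- only $J_3,J_4,J_5,J_8$ at the BS, together with $J_1,J_2$ at the UEs. Thus your argument that ``every term $J_1,\dots,J_8$ is reproduced verbatim'' is not how the equality goes through. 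The correct route, which the paper uses, is: the FD region (fewer constraints) trivially contains the TD region; conversely, Corollary~\ref{cor1} shows that $J_6$ and $J_7$ are redundant once combined with $J_1,J_2$, so the TD region in Theorem~\ref{nathr1} is not actually smaller, and the two coincide. Replacing your ``identical constraint sets'' step with this appeal to Corollary~\ref{cor1} closes the gap.
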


\subsection{Two-Band Frequency Division}
In \cite{Mes7}, another half-duplex scheme is proposed based on FD. In each block, the bandwidth is divided into two bands with widths $\beta$ and $\bar{\beta}=1-\beta$. Each band is divided by half into two sub-bands. In the first band,  $\text{UE}_2$ works as a relay for  $\text{UE}_1$ while the opposite happens in the second band. In the first sub-band,  $\text{UE}_1$ sends its information with $\rho_{12}$ power and  $\text{UE}_2$ decodes it. In the second sub-band,  $\text{UE}_1$ and  $\text{UE}_2$ allocate the powers $\rho_{1}^{(1)}$ and $\rho_{1}^{(2)}$, respectively to send the previous information of $\text{UE}_1$ to the BS. The opposite happens in the second band. The BS employs sliding window decoding. The achievable rate of this scheme consists of the rate pairs $(R_1, R_2)$ satisfying \cite{Mes7}
\begin{align}\label{MesbaS}
R_1&\leq \min\{A_1, A_3\},\;R_2\leq\; \min\{A_2,A_4\},\;\;
A_1=0.5\beta\log(1+g_{12}^2\rho_{12}),\;A_2=0.5\bar{\beta}\log(1+g_{21}^2\rho_{21}),\\
A_3&=0.5\beta\log\left(1+g_{10}^2\rho_{12}+\big(g_{10}\sqrt{\rho_{1}^{(1)}}+g_{20}\sqrt{\rho_{1}^{(2)}}\big)^2\right),\nonumber\\
A_4&=0.5\bar{\beta}\log \left(1+g_{20}^2\rho_{21}+\big(g_{10}\sqrt{\rho_{2}^{(1)}}+g_{20}\sqrt{\rho_{2}^{(2)}}\big)^2\right),\nonumber
\end{align}
for some $0\leq\beta\leq1$ and power allocation satisfying
\begin{align}
\beta(\rho_{12}+\rho_{1}^{(1)})+\bar{\beta}\rho_2^{(1)}&\leq {\cal P}_1,\;\;
\bar{\beta}(\rho_{21}+\rho_{2}^{(2)})+\beta\rho_1^{(2)}\leq {\cal P}_2.
\end{align}
\begin{cor}\label{cor_comp_sc2}
Compared with the proposed scheme, the $2$-band scheme has longer delay and smaller rate region.
\end{cor}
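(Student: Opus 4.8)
The statement splits into two claims — longer decoding delay and smaller rate region — and the first is essentially immediate from the surrounding discussion, so I would concentrate on the second. In the $2$-band scheme of \cite{Mes7}, within any block the two UEs are still exchanging the \emph{current} cooperative information on one sub-band while only the \emph{previous} cooperative information can be forwarded to the BS on the other; hence the transmit signal is necessarily block-Markov and the BS must use sliding-window decoding, costing one block of decoding delay. The proposed TD scheme forwards within the same block (time division rather than frequency division removes the Markov constraint) and the BS decodes jointly at the end of each block with no block delay. So the substance of the corollary is the rate-region inclusion, which is where the work lies.

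For the rate region I would show that the region in \eqref{MesbaS} is contained in the region of Theorem~\ref{nathr1}. The region in \eqref{MesbaS} is simply the rectangle $\{(R_1,R_2):R_1\le\min\{A_1,A_3\},\ R_2\le\min\{A_2,A_4\}\}$ — it has no sum-rate constraint because frequency division makes the two relay-to-BS transmissions non-interfering — whereas the region in \eqref{th1Grrsimp} is downward-closed in $(R_1,R_2)$ (each constraint is of the form $R_1\le c$, $R_2\le c$ or $R_1+R_2\le c$, and the union over parameters preserves this). Hence it suffices to exhibit, for every admissible $2$-band parameter set $(\beta,\rho_{12},\rho_{21},\rho_1^{(1)},\rho_1^{(2)},\rho_2^{(1)},\rho_2^{(2)})$, a choice of $(\alpha_1,\alpha_2)$ and powers of the proposed scheme under which the corner point $(\min\{A_1,A_3\},\min\{A_2,A_4\})$ is achievable. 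I would take $\alpha_1=\beta/2$, $\alpha_2=\bar\beta/2$ (so $\alpha_3=\tfrac12$), $\rho_{11}=\rho_{12}$, $\rho_{22}=\rho_{21}$, $\rho_{10}=\rho_{20}=0$, and read off $\rho_{13}=2{\cal P}_1-\beta\rho_{12}$, $\rho_{23}=2{\cal P}_2-\bar\beta\rho_{21}$ from \eqref{powcsch2}. With this dictionary, \eqref{spr12} and \eqref{sprd} give $J_1=A_1$, $J_2=A_2$ and $J_3=J_4=J_5=0$, so the first three constraints of \eqref{th1Grrsimp} hold at the corner, since $\min\{A_1,A_3\}\le A_1=J_1+J_3$, $\min\{A_2,A_4\}\le A_2=J_2+J_4$, and their sum is at most $A_1+A_2=J_1+J_2+J_5$.

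The main obstacle is the remaining constraint $R_1+R_2\le J_8$. Structurally, the proposed scheme collapses the two separate coherent relay sub-bands of \cite{Mes7}, which carry the combining gains $B_1=(g_{10}\sqrt{\rho_1^{(1)}}+g_{20}\sqrt{\rho_1^{(2)}})^2$ and $B_2=(g_{10}\sqrt{\rho_2^{(1)}}+g_{20}\sqrt{\rho_2^{(2)}})^2$ over a quarter-band each, into the single length-$\tfrac12$ phase~$3$ carrying the pooled gain $B_3=(g_{10}\sqrt{\rho_{13}}+g_{20}\sqrt{\rho_{23}})^2$. Using $\tfrac\beta2+\tfrac{\bar\beta}2=\tfrac12$ to split $\tfrac12\log(1+B_3)$ and then $(1+x)(1+y)\ge 1+x+y$, the expression for $J_8$ in \eqref{sprd} (with $\rho_{10}=\rho_{20}=0$) factors as $\tfrac\beta2\log\bigl((1+g_{10}^2\rho_{12})(1+B_3)\bigr)+\tfrac{\bar\beta}2\log\bigl((1+g_{20}^2\rho_{21})(1+B_3)\bigr)$, so it would be enough to prove $J_8\ge A_3+A_4$, i.e.\ term by term that $B_3$, boosted by the direct-link factors $1+g_{10}^2\rho_{12}$ and $1+g_{20}^2\rho_{21}$, dominates $B_1$ and $B_2$ after the power-constraint bookkeeping. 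For $\beta$ near $0$ or $1$ this is immediate (the short sub-band contributes a vanishing $\alpha$-weighted logarithm, while the power constraint of \cite{Mes7} directly yields $\rho_{13}\ge 2\rho_1^{(1)}$, $\rho_{23}\ge 2\rho_2^{(2)}$, etc., for the long one), so the delicate case is an intermediate $\beta$: there $B_3$ need not individually exceed $B_1$, and one must instead play the possibly-negative deficit on the $\tfrac\beta2$-term against the surplus on the $\tfrac{\bar\beta}2$-term, or fall back on $\min\{A_1,A_3\}\le A_1$ when that term is binding. Monotonicity of $t\mapsto\log(1+g_{10}^2\rho_{12}+t)$ together with the power identities should close this, but it is the step I expect to require the most care. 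A shorter alternative, if one accepts it, is to observe that the $2$-band scheme is a restriction of the three-band FD scheme and invoke Corollary~\ref{cor3b}. Either way, combining the delay observation with the inclusion gives the corollary.
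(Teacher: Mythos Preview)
Your treatment of the delay claim matches the paper's. For the rate-region claim, however, the paper's proof is far less ambitious than yours: it simply asserts that the $2$-band scheme ``uses neither information splitting nor superposition coding,'' and that these two techniques ``enlarge the rate region as shown in Appendix~A.'' There is no explicit parameter-matching or inclusion argument; the paper treats the comparison as a qualitative consequence of the extra degrees of freedom in the proposed scheme, pointing to the ML-decoding analysis in Appendix~A as the justification that superposition coding and rate splitting help. Your ``shorter alternative'' --- viewing the $2$-band scheme as a restriction of the $3$-band FD scheme and then invoking Corollary~\ref{cor3b} --- is in fact much closer in spirit to what the paper actually does.

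Your main approach, the constructive embedding via $\alpha_1=\beta/2$, $\alpha_2=\bar\beta/2$, $\rho_{10}=\rho_{20}=0$, is a genuinely different and more rigorous route. You correctly isolate the $J_8$ constraint as the only nontrivial one, and your diagnosis that pooling the two coherent sub-bands into a single phase-$3$ term is where the work lies is accurate. But the step you flag as ``requiring the most care'' is not actually closed in your write-up: for intermediate $\beta$ the termwise comparison $B_3\ge B_i$ can fail, and you would need either a Cauchy--Schwarz-type argument on the pooled beamforming gain or the fallback to the $A_1,A_2$ bounds to finish. The paper sidesteps this entirely by not attempting a direct inclusion. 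If you want a self-contained proof at the level of rigor you are aiming for, completing the $J_8$ inequality (or carrying through the $3$-band restriction argument explicitly) is the real content; the paper does not supply it.
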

\begin{proof}
The $2$-band scheme has one block delay because of using sliding window decoding. Moreover, the scheme uses neither  information splitting nor superposition coding. These two techniques, which are employed in the proposed scheme, enlarge the rate region  as shown in Appendix A.
\end{proof}
\subsection{Outage Probability Analysis}
Next, we derive the outage probability for the existing  schemes in \cite{Erkip, Mes7, SErkip} as outage results are unavailable in these previous works.
\subsubsection{Outage for the $3$-band Frequency-Division Transmission}
For the $3$-band FD scheme derived from \cite{Erkip} and the OFDMA scheme in \cite{SErkip}, the outage probability is given as follows.
\begin{cor}\label{corTsh}
The outage probability for the $3$-band FD or OFDMA scheme  is similar to the proposed TD scheme except that the cooperative common outage for Case $2$ in (\ref{prcc}) is replaced with
\begin{align}\label{prFDcc}
P_{cc}=1-P\bigg[R_{12}+R_{21}\leq\; J_8-(R_{10}+R_{20})|\xi_1\bigg]
\end{align}
\end{cor}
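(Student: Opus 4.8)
The plan is to re-derive the outage probability of the $3$-band FD (and $3$-set OFDMA) scheme by walking through the Case~2 analysis of Section~\ref{sec: outana} step by step and isolating the single place where the FD decoding structure forces a change. First I would invoke Corollary~\ref{cor3b}: since the $3$-band FD scheme and the OFDMA scheme of \cite{SErkip} achieve the same per-realization rate region as the TD scheme, the quantities $J_1$--$J_8$ are literally the same functions of the channel gains and powers. Hence the decomposition over the four channel configurations in Theorem~\ref{outthm}, and the configuration probabilities (cf.\ Lemma~2), are unchanged, since they depend only on the joint law of $(g_{12},g_{10},g_{21},g_{20})$. Likewise, in the FD scheme bands~$1$ and~$2$ play exactly the role of phases~$1$ and~$2$ of TD (one UE transmits, the other listens), so the UE-outage events and the probabilities $P_{m1},P_{m2}$ in \eqref{outm1m2} are identical; and once the cooperative signal is removed the private parts $(j,l)$ still see a MAC, so the private-part outage probabilities $P_{cp},P_{1p},P_{2p}$ in \eqref{fornr} and the combining identities \eqref{pcio}, \eqref{outc2}, \eqref{outc3} keep the same form. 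It then remains only to recompute the cooperative common outage $P_{cc}$.

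For that step I would examine how the BS recovers the cooperative information in the FD scheme. Unlike TD, where the BS performs a single joint-ML decoding over the three phases of the \emph{same} block as in \eqref{decr2}, the FD scheme inherits the block-Markov structure of \cite{fcc2005fof, Erkip}: the two UEs' cooperative parts are forwarded to the BS as a single common (cloud-center) message that is resolved in the third band of the \emph{following} block and recovered by backward (or sliding-window) decoding, with the private parts of the relevant block treated as already decoded. Because this common message is carried by one codeword per value of the pair $(i,k)$ and is decoded jointly, the decoding-error analysis at the BS yields only the joint sum-rate constraint on the cooperative pair, $R_{12}+R_{21}\le J_8-(R_{10}+R_{20})$, and not the per-UE constraints $R_{12}\le J_6-(R_{10}+R_{20})$ and $R_{21}\le J_7-(R_{10}+R_{20})$ in \eqref{ouc2p1} that the single-block joint ML produces in TD. (In Cases~$3$ and~$4$ there is only one cooperative part, and with $\alpha_2=\rho_{22}=0$ one has $J_6=J_8$, so no change occurs there; Case~$1$ is non-cooperative.) Substituting this single constraint into the definition of $P_{cc}$ yields exactly \eqref{prFDcc}, and feeding it through \eqref{pcio}, \eqref{outc2} then gives the FD outage probabilities with every other term unchanged.

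The technical heart, and the step I expect to be the main obstacle, is the rigorous decoding-error analysis of the FD backward decoder: one must show, via an ML/error-exponent argument paralleling Appendix~A but now accounting for the cross-block resolution codewords, that once the block-Markov resolution is used the only binding constraint for the cooperative common message is the joint one, i.e.\ that the per-UE cooperative error events do not contribute an active outage constraint. I would carry this out by writing the ML metric of the backward decoder over the two relevant blocks, enumerating the error events for the common message as in the proof sketch of Lemma~1, and verifying that the single-index error events are implied by the joint-pair event together with the UE no-outage conditioning event $\xi_1$. A secondary bookkeeping point is to confirm that the precise decoding order (backward versus sliding-window) affects only which block's private parts are known at each step, not the form of the region (already guaranteed by Corollary~\ref{cor3b}). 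Once this is established, the remaining claims are immediate: the common/individual outage formulas of \eqref{outc2}--\eqref{outc3} and the outage-rate-region statements of Definition~\ref{def1} transfer verbatim with $P_{cc}$ replaced by \eqref{prFDcc}.
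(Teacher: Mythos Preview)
Your proposal is correct and follows essentially the same approach as the paper: the key point, in both, is that the backward decoding employed by the $3$-band FD/OFDMA schemes eliminates the per-UE cooperative constraints $R_1+R_{20}\le J_6$ and $R_{10}+R_2\le J_7$ from the BS decoding region, leaving only the joint constraint $J_8$, so that \eqref{prcc} collapses to \eqref{prFDcc}. The paper's own proof is a two-sentence statement of exactly this fact and does not carry out the detailed ML error-event analysis you outline as the ``technical heart''; your additional verifications (that Cases~$1,3,4$ are unchanged because $J_6=J_8$ when $\alpha_2=\rho_{22}=0$, and that the UE outages and private-part outages carry over verbatim) are correct and more thorough than what the paper provides, but not required for the corollary as stated.
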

\begin{proof}
Since the BS in both schemes employs backward decoding, the rate constraints at the BS are similar to (\ref{sprd}) but without $R_1+R_{20}\leq J_6$
and $R_{10}+R_2\leq J_7$. Hence, these $2$ constraints are removed from the cooperative common outage in (\ref{prFDcc}).
\end{proof}
\subsubsection{Outage for the $2$-band Frequency-Division Transmission}
For this scheme, the outage probability can be formulated considering the achievable rate region in (\ref{MesbaS})
and following similar procedure for the outage of the proposed scheme where in case
\begin{itemize}
  \item Case $1$, direct transmission is used with $\rho_{1}^{(2)}=\rho_2^{(1)}=0$.
  \item Case $2$, cooperation from both UEs.
  \item Case $3$, cooperation from $\text{UE}_1$ and direct transmission form  $\text{UE}_2$ with $\rho_{2}^{(1)}=0$.
  \item Case $4$, cooperation from  $\text{UE}_2$ and direct transmission form  $\text{UE}_1$ with $\rho_{1}^{(2)}=0$.
\end{itemize}
Then, the outage probability is given as follows.
\begin{cor}
For the $2$-band FD scheme with the achievable rate region in (\ref{MesbaS}), the common outage probability is given as in (\ref{outtc}) but with
\begin{align}
P_{c1}&=P\big[R_1>A_3,R_3\leq A_4\big]\big]+P\big[R_1>A_3,R_3>A_4\big]
\;+P\big[R_1\leq A_3,R_3>A_4\big],\\
P_{c2}&=P_{m1}+\bar{P}_{m1}P_{m2}+\bar{P}_{m1}\bar{P}_{m2}P_{c1},\;\;
P_{c3}=P_{m1}+\bar{P}_{m1}P_{c1},\;P_{c4}=P_{m2}+\bar{P}_{m2}P_{c1}.\nonumber
\end{align}
The individual outage probabilities are formulated similarly. However, since in each bands, one UE works as a relay for the other UE, the outage at one UE will lead to an outage of the other UE information at the BS and not both information as in the proposed scheme. Hence, for  $\text{UE}_1$ outage, we have
\begin{align}
P_{11}&=P\big[R_1>A_3,R_3\leq A_4\big]\big]+P\big[R_1>A_3,R_3>A_4\big],\nonumber\\
P_{12}&=P_{m2}+\bar{P}_{m2}P_{11},\;P_{13}=P_{12}, P_{14}=P_{11}.
\end{align}
The outage at  $\text{UE}_2$ is formulated similarly.
\end{cor}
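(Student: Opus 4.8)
The plan is to reduce the claim to disjoint-event bookkeeping on top of the achievable region \eqref{MesbaS}, running exactly parallel to the argument behind Theorem~\ref{outthm} but with a rectangular instead of a pentagonal rate region. Outage is, by definition, the event that the target pair $(R_1,R_2)$ falls outside \eqref{MesbaS}. First I would partition this event according to the four channel-ordering cases $g_{12}\lessgtr g_{10}$, $g_{21}\lessgtr g_{20}$, since each UE selects cooperate-versus-direct from its two-bit link-order feedback; the probabilities of these four cases are the $\mu$-ratio products of \eqref{prob1} and the stated variants for the other cases (they depend only on the ordering, not on which scheme is run). Conditioned on a case, only the identities of the zeroed powers change ($\rho_1^{(2)}=\rho_2^{(1)}=0$ in Case~$1$, $\rho_2^{(1)}=0$ in Case~$3$, etc.), not the form of any rate expression, so it suffices to treat Case~$2$ in detail and read off the others.

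Next I would split the conditional outage into a UE part and a BS part. Because each transmitting UE has no CSI on its inter-UE link, the rate $R_1$ (resp. $R_2$) can exceed $A_1$ (resp. $A_2$), the largest rate the inter-UE channel supports; this is an outage at $\text{UE}_2$ (resp. $\text{UE}_1$) of exactly the form \eqref{outm1m2} with the two-band parameters $(0.5\beta,\rho_{12})$ (resp. $(0.5\bar\beta,\rho_{21})$), which I keep calling $P_{m2}$ (resp. $P_{m1}$). Here is the one genuinely scheme-specific point, and the step I expect to be the main obstacle: the two-band scheme uses neither rate splitting nor superposition, so each UE's whole message is relayed as a single block, and a decoding failure at the band-$1$ relay $\text{UE}_2$ costs $\text{UE}_1$'s message its coherent-combining gain at the BS --- the relevant BS constraint degrades from $R_1\le A_3$ to $R_1\le 0.5\beta\log(1+g_{10}^2\rho_{12})$ --- while leaving $\text{UE}_2$'s constraint untouched. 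Thus a failure at one UE propagates to an outage of the \emph{other} UE's information only, in contrast to the proposed scheme where a UE outage kills the common part and hence both private parts. I would justify this by direct inspection of the band structure of \cite{Mes7} recalled before \eqref{MesbaS}.

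Given no relay failures, \eqref{MesbaS} is a product region, so the BS common outage is the event $\{R_1>A_3\}\cup\{R_2>A_4\}$, which decomposes into the three disjoint pieces written for $P_{c1}$, and the BS individual outage for $\text{UE}_1$ is $\{R_1>A_3\}$, i.e. the first two of those pieces, which is $P_{11}$. Assembling Case~$2$ by chaining disjoint events --- $\text{UE}_1$ relay fails (probability $P_{m1}$), else $\text{UE}_2$ relay fails ($\bar P_{m1}P_{m2}$), else both relays succeed but the BS common-outage event occurs, whose probability has the form $P_{c1}$ now with the Case-$2$ powers ($\bar P_{m1}\bar P_{m2}P_{c1}$) --- gives $P_{c2}=P_{m1}+\bar P_{m1}P_{m2}+\bar P_{m1}\bar P_{m2}P_{c1}$; dropping the inactive relay term in Cases~$3$ and~$4$ (only one UE cooperates) gives $P_{c3}$ and $P_{c4}$, and Case~$1$ (no relaying) leaves just $P_{c1}$. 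For $\text{UE}_1$'s individual outage, only the relay that actually carries $\text{UE}_1$'s data, namely $\text{UE}_2$, matters, so $P_{12}=P_{m2}+\bar P_{m2}P_{11}$; since $\text{UE}_2$ still relays for $\text{UE}_1$ in Case~$3$, $P_{13}=P_{12}$, while in Case~$4$ $\text{UE}_1$ transmits directly so $P_{14}=P_{11}$; the $\text{UE}_2$ expressions follow by relabeling. Finally I would substitute the four $P_{cm}$ (and the individual analogues) into \eqref{outtc} with the case probabilities from the first step to obtain $\bar P_c$ (and $\bar P_1,\bar P_2$). The only care needed throughout is to keep the chained events disjoint and exhaustive, which is immediate once the ``failure at one UE $\Rightarrow$ outage of the other UE's message'' claim is established.
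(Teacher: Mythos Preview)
Your proposal is correct and follows essentially the same approach the paper intends: the paper's own proof is the single line ``Obtained following similar procedure of the proposed scheme,'' and your disjoint-event bookkeeping over the four link-ordering cases, with the rectangular (product) region replacing the pentagon and the one-sided propagation of relay failures, is exactly that procedure spelled out. The one point worth tightening is that you correctly observe a relay failure only \emph{degrades} the BS constraint (e.g.\ from $R_1\le A_3$ to the direct-link-only bound) rather than forcing an outage outright, yet in assembling $P_{c2}$ and $P_{12}$ you --- like the paper --- count any relay failure as an outage; this is the same modeling choice the paper makes for the proposed scheme in \eqref{outc2}, so it is consistent, but you should state explicitly that you are adopting this convention rather than leave the two remarks in tension.
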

\begin{proof}
Obtained following similar procedure of the proposed scheme.
\end{proof}
\vspace{-3mm}
\subsection{Tradeoff between decoding delay and rate constraints}
Comparison between the proposed TD scheme and the $3$-band FD and OFDMA schemes in \cite{SErkip} reveals the following interesting trade-offs among  decoding delay, rate constraints  and outage performance.
 Based on formula (\ref{sprd}) and the proof of Corollary \ref{corTsh}, the BS can decode with fewer rate constraints if it is allowed longer decoding delay, as it  can use more received signals in order
 to have better estimation of the transmitted information. Specifically, the OFMA scheme in \cite{SErkip} employs backward decoding where the BS, at each block, decodes the current private information and the previous cooperative information given that it knows the current cooperative information. This knowledge reduces the error events and hence, the rate constraints stemmed from the decoding at the BS. Therefore rate constraints $J_6$ and $J_7$ are not present in the FD backward decoding implementation but are present in our proposed TD scheme.

For application in wireless channels, however, this difference in decoding rate constraints do not matter for the overall achievable rate region, as shown in Corollary \ref{cor1} and Corollary \ref{cor3b}. This equivalency happens since after we combine the
 rate constraints from the UEs and the BS, we then find as  in Corollary \ref{cor1} that the additional constraints, stemmed from the decoding at the BS, are redundant. For outage performance, however, all decoding rate constraints matter as we need to  separately consider the outages at UEs and
  the BS. The difference in decoding rate constraints then leads to different outage formulas as shown in (\ref{prcc}) and (\ref{prFDcc}). Nevertheless, numerical results in Section \ref{sec:Num}
 show that both the $3$-band FD scheme and our $3$-phase TD scheme have the same outage performance, which suggests that the additional rate constraints in our TD scheme do not matter in outage performance either.

\section{Numerical Results}\label{sec:Num}
In this section, we provide numerical results and analysis for the achievable rate
region, outer bound, outage probabilities and outage rate region derived in Sections \ref{sec: simpsch} and \ref{sec: outana}. In these simulations, all the links are Rayleigh fading
channels with parameters specified in each simulation. All the simulations are obtained using $10^5$ samples for each fading channel.
Figure  \ref{fig:asyro} shows the ergodic achievable rate
regions while Figures \ref{fig:OUT31}---\ref{fig:OUT4} show the outage performance.
\subsection{Achievable Rate Region}
Figure \ref{fig:asyro} is obtained with ${\cal P}_1={\cal P}_2=2$ with all possible power allocations and phase durations.
It compares between the achievable rate region of the proposed and existing transmissions and the
outer bound for asymmetric channels. Results are plotted  with $\mu_{10}=4$ and $\mu_{20}=1$ while $\mu_{12}=\mu_{21}=16$.

As
discussed in Section \ref{sec:out. cap}, results show that the achievable rate
region of the proposed TD cooperative transmission is close to the outer bound since the ratios $\mu_{12}/\mu_{10}$
and $\mu_{21}/\mu_{20}$ are high. Moreover, $\text{UE}_2$ that has weaker link to the BS obtains higher gain in the individual rate than
$\text{UE}_1$. This is because $\text{UE}_1$ has stronger link to the BS and can work as a relay for $\text{UE}_2$ information. Comparing with other transmissions, the $3$-band FD scheme in \cite{Erkip, SErkip} achieves the same rate region of the proposed scheme as mentioned in Corollary \ref{corTsh}. However, the proposed $3$-phase TD cooperative transmission outperforms the resource partitioning (RP) with orthogonal transmission (LTE-A) and the concurrent transmission with SIC as no cooperation
is employed in these $2$ schemes. The proposed scheme also outperforms the $2$-band frequency division FD in \cite{Mes7} since neither rate
splitting nor superposition coding is employed in $2$-band cooperative scheme as shown in Corollary \ref{cor_comp_sc2}.
\noindent
\begin{figure}[t]
    \centering
    \begin{minipage}[b]{0.48\linewidth}
    \includegraphics[width=0.90\textwidth, height=60mm]{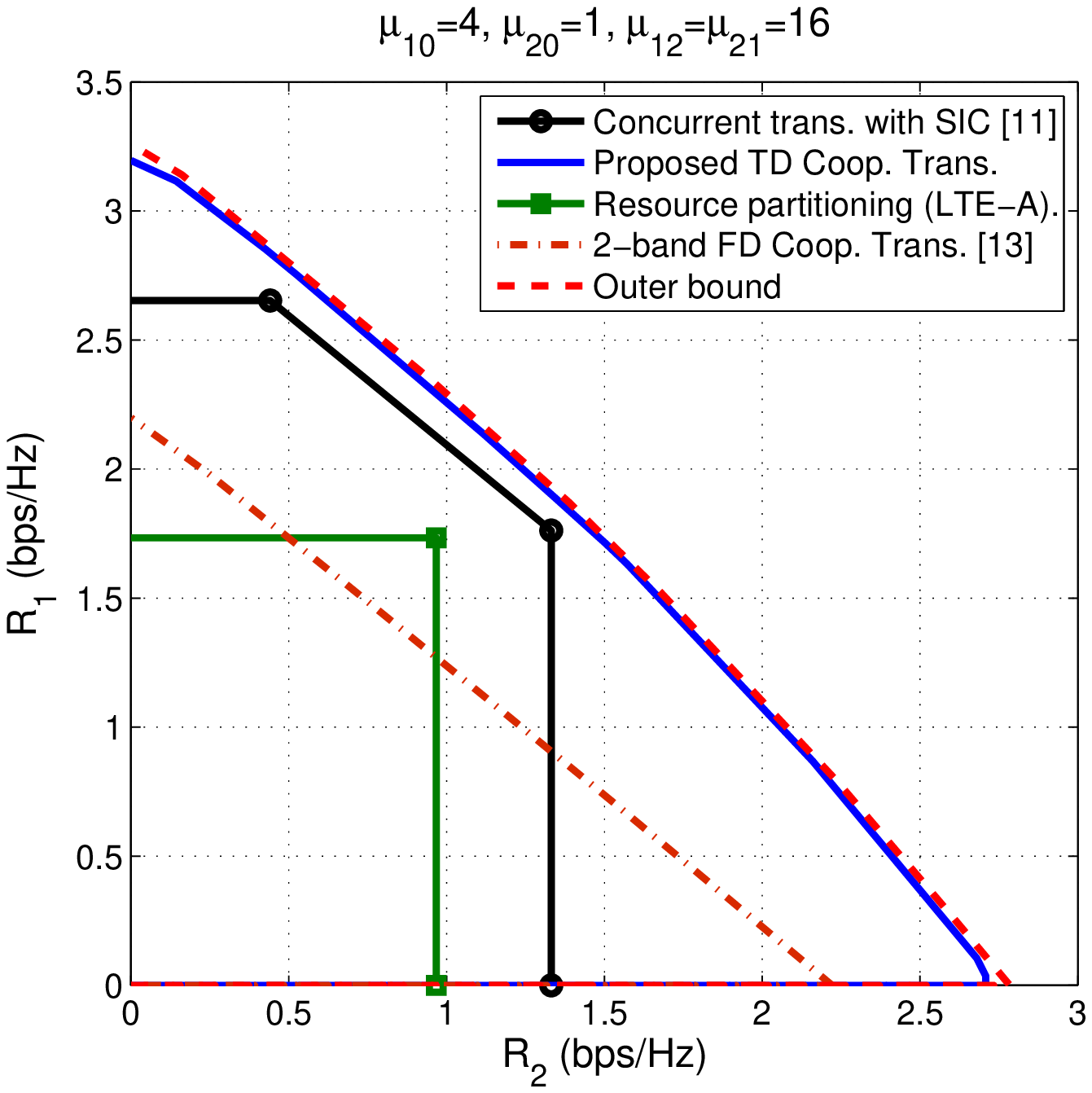}
    \caption{Achievable rate regions and outer bounds for asymmetric half-duplex D$2$D cooperative transmissions (TD=time-division, FD=frequency-division and RP= resource partitioning).}
    \label{fig:asyro}
    \end{minipage}
    \hfill
\begin{minipage}[b]{0.48\linewidth}
    \includegraphics[width=0.90\textwidth, height=60mm]{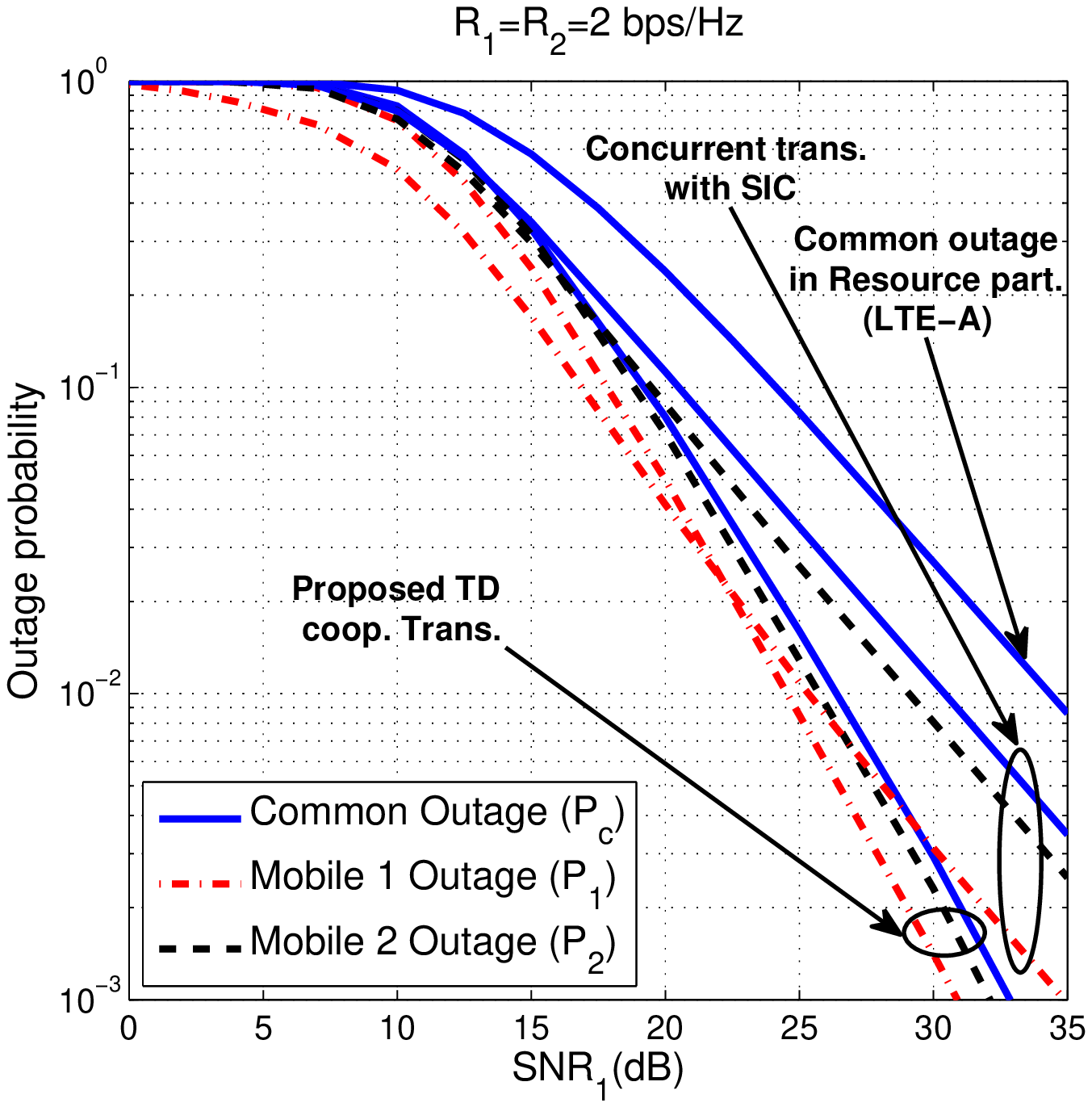}
    \caption{Comparison between the proposed and existing schemes in terms of common and individual outage probabilities $\text{SNR}_1$ with $R_1=R_2=2$.}
    \label{fig:OUT31}
    \end{minipage}
\vspace*{-4mm}
\end{figure}
\noindent
\begin{figure}[!t]
    \centering
    \begin{minipage}[b]{0.48\linewidth}
    \includegraphics[width=0.90\textwidth, height=60mm]{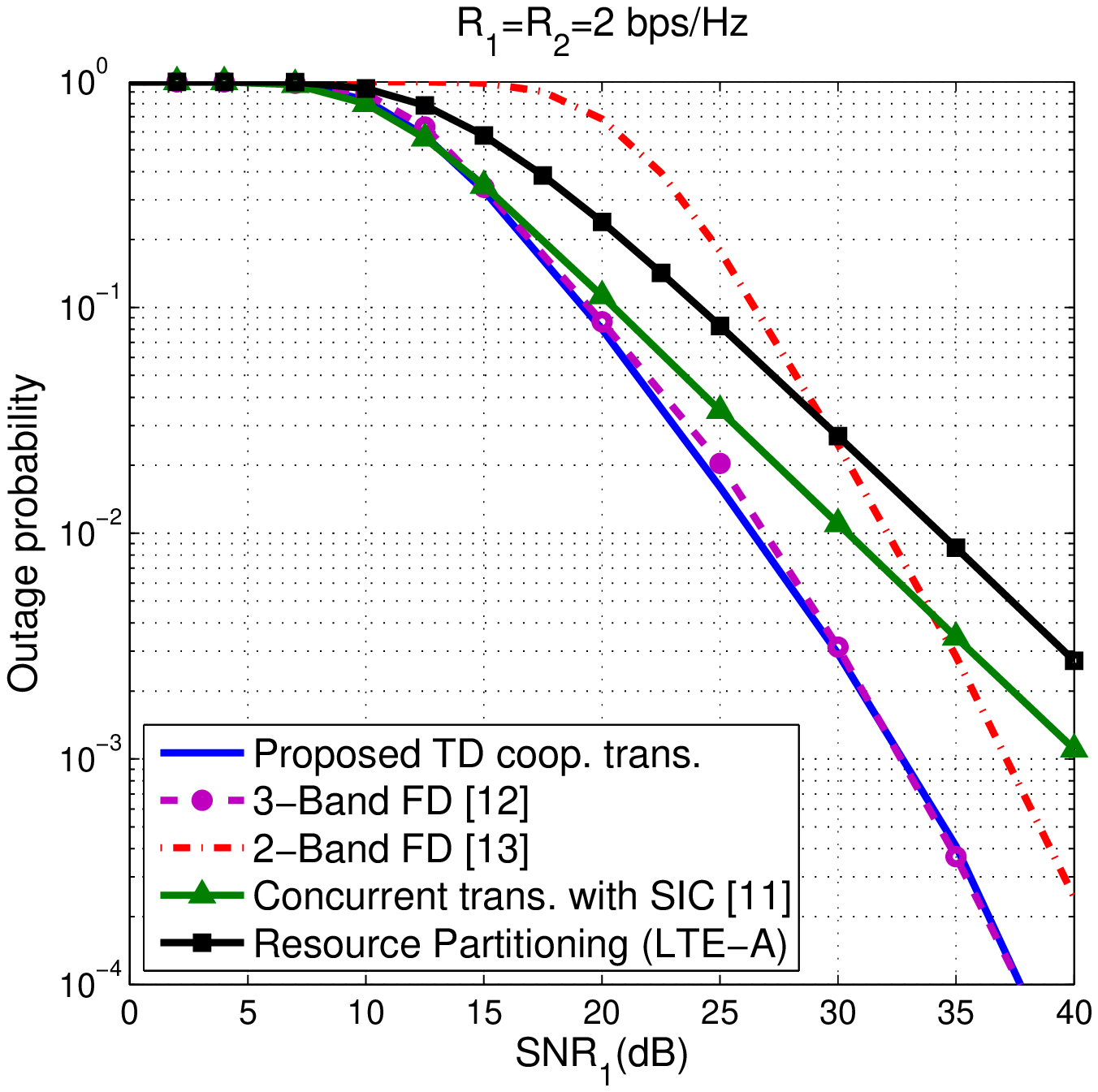}
    \caption{Comparison between the proposed and existing schemes in terms of common outage probabilities $\text{SNR}_1$ with $R_1=R_2=2$.}
    \label{fig:OUT331}
    \end{minipage}
    \hfill
\begin{minipage}[b]{0.48\linewidth}
    \includegraphics[width=0.90\textwidth, height=60mm]{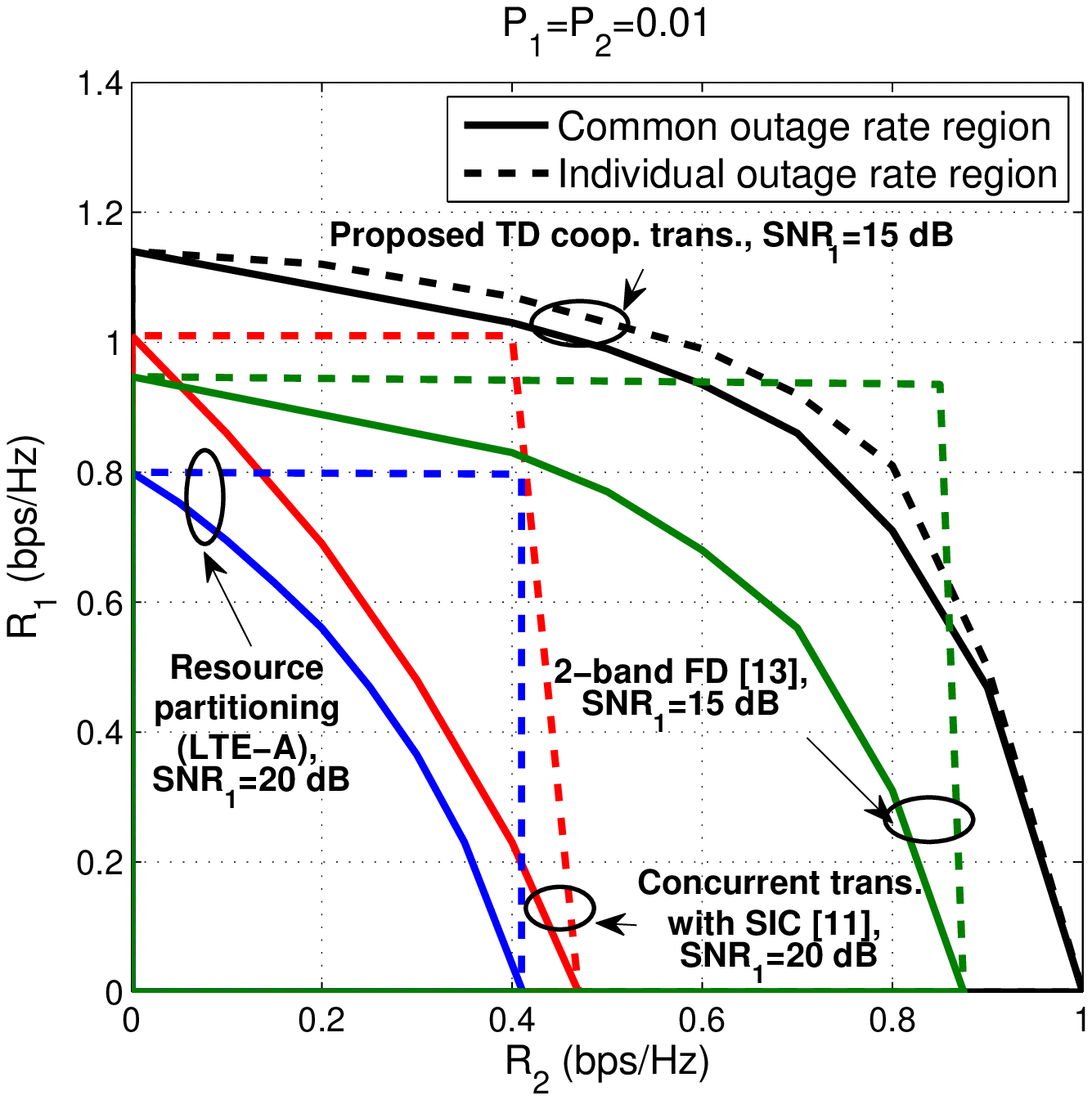}
    \caption{Common and individual outage rate regions for the proposed and existing uplink transmissions at $P_1=P_2=1\%$ and different $\text{SNR}_1$.}
    \label{fig:OUT4}
    \end{minipage}
\vspace*{-6mm}
\end{figure}
\subsection{Outage Probability}
We now provide numerical results for the formulated outage probabilities and outage rate region. The simulation settings and channel configuration are: $d_{10}=20,$ $d_{20}=30,$ $d_{12}=d_{21}=12,$ ${\cal P}_1={\cal P}_2=\rho$ and $\gamma=2.4$. The average channel gain for each link is given as
$\mu_{ij}=\frac{1}{d{ij}^{\gamma}}$. With these settings, we define the average received $\text{SNR}$ at the BS for signals from $\text{UE}_1$ ($\text{SNR}_1$) and $\text{UE}_2$ $(\text{SNR}_2)$ as follows.
\begin{align}
\text{SNR}_1&=10\log\left(\frac{\mu_{10}\rho}{d_{10}^{\gamma}}\right),\;\;
\text{SNR}_2=10\log\left(\frac{\mu_{20}\rho}{d_{20}^{\gamma}}\right)=\text{SNR}_1
+10\log\left(\frac{\mu_{20}d_{10}^{\gamma}}{\mu_{10}d_{20}^{\gamma}}\right)
\end{align}
The phase durations are set as follows: case $2$: $\alpha_1=\alpha_2=0.25$ and $\alpha_3=0.5$, case $3$: $\alpha_1=0.4$ and $\alpha_3=0.6$ and case $4$:  $\alpha_2=0.4$ and $\alpha_3=0.6$. We fix  the phase duration
to simplify computation while the optimal power allocations and rate splitting are obtained numerically. For the $2$-band FD
scheme in (\ref{MesbaS}) \cite{Mes7}, we fix the $2$ bandwidths to half $(\beta=\bar{\beta}=0.5)$ for all cases.

Figure \ref{fig:OUT31} shows the outage probabilities versus $\text{SNR}_1$ for the proposed scheme and concurrent transmission with SIC. Results confirm our expectation that the common outage probability
is higher than individual outage probabilities. Moreover, at equal transmission rates, the individual outage probability for $\text{UE}_2$ is higher than $\text{UE}_1$ since $\text{UE}_2$ has weaker direct link. For low $\text{SNR}$, the non-cooperative concurrent transmission with SIC scheme  has lower outage probability than cooperative
scheme, but the outage in this range is too high for practical interest (above $10\%$).
As $\text{SNR}$ increases, the cooperative scheme starts outperforming the non-cooperative scheme. This happens because in the
 cooperative scheme, each UE transmits over a fraction of time instead of the whole time as in the concurrent transmission with SIC.
 Moreover, both UEs use part of their power to exchange information such that they transmit coherently in
 the $3^{\text{rd}}$ phase. At low $\text{SNR}$, the coherent transmission has lower effect compared with the power loss in exchanging information; hence, the concurrent transmission with SIC outperforms the cooperative scheme. As $\text{SNR}$ increases, however,
 the gain obtained from coherent transmission becomes dominant such that the cooperative scheme outperforms the
 concurrent transmission with SIC. These results are obtained with arbitrary fixed phase durations; thus if they are optimally chosen, the
 cooperative scheme will outperform the non-cooperative scheme at an even lower $\text{SNR}$.

 Figure \ref{fig:OUT31} also shows that the diversity order of the cooperative scheme is $2$. This result is in contrary to that in \cite{Lantse} which shows that decode-forward scheme for half-duplex relay channel achieves a diversity order of $1$ only. The difference comes from the fact that in \cite{Lantse}, the source only transmits in the $1^{\text{st}}$ phase and there is no coherent transmission in the $2^{\text{nd}}$ phase while
   the relay always decodes even when its link with the source is weak. However, in our scheme, there is a coherent transmission in $2^{\text{nd}}$ phase and each UE only decodes if the cooperative link
  is stronger than the direct link. Intuitively, our scheme always requires $2$ links to be weak in order to lose the information of
  any UE \cite{Lantse}.   Consider the information of $\text{UE}_1$,  if the cooperative link is weak, this information will be lost if the direct
  link of $\text{UE}_1$  is also weak. If the cooperative link is strong, this information will be lost if both direct links are weak.

 Figure \ref{fig:OUT331} compares the common outage probabilities of the proposed transmission, $3$-band FD \cite{SErkip},
 $2$-band FD \cite{Mes7}, concurrent with SIC \cite{hansen1977mfr} and RP with orthogonal transmission (current LTE-A). Results show that  compared with non-cooperative transmissions,
 cooperation improves the diversity whether using the proposed TD transmission, $3$ or
 $2$-band FD. However, $2$-band FD transmission requires more power to start outperforming the non-cooperative transmissions. The proposed TD scheme and $3$-band FD transmissions
  have the same outage performance although their outage formulas are slightly different as shown in Corollary \ref{cor3b}.
\subsection{Outage Rate Region}
Figure \ref{fig:OUT4} shows the common and individual outage rate regions when the target outage probability for both UEs is $1\%$. While the target outages are the same, the rate regions are asymmetric because the direct links are different. Results show that the proposed cooperative transmission  has larger regions than the non-cooperative ones (resource partitioning (RP) or concurrent transmission with SIC) even when it has a lower transmit power lower by $5\text{dB}$. Note that considering the outage performance in Figure \ref{fig:OUT31},
the gap between the cooperative and non-cooperative transmissions will increase if the target outage probability decreases, and vice versa.

The $3$-band $\text{UE}_1$ \cite{Erkip, SErkip} has the same performance as the proposed TD transmission. For the $2$-band FD transmission \cite{Mes7}, while the common outage region is always included in that of the proposed transmission, the individual outage region unexpectedly intersects with that of the proposed transmission. This intersection may occur since we fix the phases of the proposed transmissions and the bandwidths for $2$-band transmission in \cite{Mes7}. While fixing them simplifies the computations, these selections can be suboptimal and lead to unexpected results.
The largest regions should be obtained from all possible phase durations and bandwidths.
\section{Conclusion}\label{sec:conclusion}
We have analyzed both the instantaneous achievable rate region and the
outage probability of a D$2$D time-division  cooperative scheme in uplink
cellular communication. The scheme employs rate splitting, superposition
coding, partial decode-forward relaying and ML decoding in a $3$-phase
half-duplex transmission.
When applied to fading channels, outage probabilities can be
computed based on outages at the user equipments and the base station. We formulate for the first
time both the common and individual outage probabilities for a cooperative transmission scheme. Moreover, we formulate the outage performance of the existing FD-based schemes and compare them with  the proposed scheme. Numerical results show significant improvement in the
instantaneous achievable rate region at all SNR, and in the outage performance
 as the SNR increases. These results suggest the use of cooperation for most practical ranges of SNR, except very low SNR where non-cooperative
 schemes may have better outage performance. For future work, it is of interest to apply of the proposed scheme in a multi-cell multi-tier system and analyze its impact on the spectral efficiency and
 outage performance of the entire cellular network.
\appendices
\section{Proof of the Achievable Region in Theorem \ref{nathr1}}
First,  the transmission rates are related to the size of the information sets $({\cal I },$ ${\cal J },$ ${\cal K }$ and ${\cal L})$ as follows.
\begin{align}
R_{12}&=\frac{1}{n}\log|{\cal I }|,\;R_{10}=\frac{1}{n}\log|{\cal J }|,\;\;\;
R_{21}=\frac{1}{n}\log|{\cal K }|,\;R_{20}=\frac{1}{n}\log|{\cal L }|.
\end{align}

The error events at UEs can be analyzed as in
\cite{haykin2005crb, abramowitz1972hmf}. To make these error probabilities
approach zero, $R_{12}$ and $R_{21}$ must satisfy the first two constraints
involving $I_1$ and $I_2$ in (\ref{spr12}).

For the decoding at the BS, the maximum rate $J_{\star}$  achievable for the given channel realization in (\ref{sprd})
is obtained by upper bounding an error event resulted from the decoding
rule in (\ref{decr2}) as follows. Assuming all information vectors are
equally likely, the error probability does not depend on which vector
$(i,k,j,l)$ was sent. Without loss of generality, assume that the event
$E_0 = \{i=k=j=k=1 \text{ was sent}\}$ occurred. Then, $J_{\star}$ ensures
that the probability of error event ${E_{\star}}$ approaches zero as
the transmit sequence lengths increase, where for
\begin{itemize}
    \item $J_3$: ${E_1}=\{\hat{i}=\hat{k}=\hat{l}=1, \hat{j}\neq1\}$, only
      the private part of $\text{UE}_1$ is decoded incorrectly.
    \item $J_4$: ${E_2}=\{\hat{i}=\hat{k}=\hat{j}=1, \hat{l}\neq1 \}$, only
      the private part of $\text{UE}_2$ is decoded incorrectly.
    \item $J_5$: ${E_3}=\{\hat{i}=\hat{k}=1, (\hat{j}, \hat{l})\neq1 \}$,
      both private parts are decoded incorrectly.
    \item $J_6$: ${E_4}= \{\hat{k}=1, (\hat{i}, \hat{j}, \hat{l})\neq1\}$,
      both private parts of two UEs and the cooperative part of $\text{UE}_1$ are decoded incorrectly.
    \item $J_7$: ${E_5}= \{\hat{i}=1, (\hat{k}, \hat{j}, \hat{l})\neq1\}$,
      similar to $J_6$ but with cooperative part of $\text{UE}_2$ decoded incorrectly.
    \item $J_8$: $P_{E_6}= \{(\hat{i}, \hat{k}, \hat{j}, \hat{l})\neq1\}$,
      all information parts are decoded incorrectly.
  \end{itemize}
To analyze the upper bounds for the probabilities of these error events, we will first divide them into $2$ groups where error probabilities in each group have
 similar analysis.
\noindent
\begin{itemize}
\item
The $1^\text{st}$ group contains: $(P_{E_1}, P_{E_2}, P_{E_3})$ and
the $2^\text{nd}$ group contains: $(P_{E_4}, P_{E_5}, P_{E_6})$
\end{itemize}
Since the error analysis for the second group is more complicated, we only analyze the $4^\text{th}$ error event. The error analysis for
the first group can be obtained similarly.

Define $\vartheta_2$ as the event that (\ref{Eq:erst}) holds.
\begin{align}\label{Eq:erst}
P(y|x_1(j,i,1),x_2(l,i,1))\geq&\; P(y|\tilde{x}_1,\tilde{x}_2) \nonumber\\
\leftrightarrow P(y_1|x_{11,i})P(y_2|x_{22,1})P(y_3|x_{13,(j,i,1)},x_{23,(l,i,1)})\geq&\; P(y_1|\tilde{x}_{11})P(y_2|\tilde{x}_{22})P(y_3|\tilde{x}_{13},\tilde{x}_{23}) \nonumber \\
\leftrightarrow P(y_1|x_{11,m_{12}})P(y_3|x_{13,(j,i,1)},x_{20,(l,i,1)})\geq&\; P(y_1|\tilde{x}_{11})P(y_3|\tilde{x}_{13},\tilde{x}_{23})
\end{align}
 Then, the probability of this event is
\begin{align*}
P(\vartheta_2)=&\sum_{x_{11},u_3,x_{13},x_{23}}
P(x_{11,i})p(u_{3,(i,1)})P(x_{13,(j,i,1)}|u_{3,(i,1)})P(x_{23,(l,i,1)}|u_{3,(i,1)})
\end{align*}
 This probability can be bounded as follows \cite{haykin2005crb, abramowitz1972hmf}.
\begin{align*}
P(\vartheta_2)\leq&\sum_{x_{11},u_3,x_{13},x_{23}}
P(x_{11,i})P(x_{13,(j,i,1)})P(x_{23,(l,i,1)})
\left(\frac{P(y_1|x_{11,i})}{P(y_1|x_{11}(1))}\right)^s
\left(\frac{P(y_3|x_{13,(j,i,1)},x_{23,(l,i,1)})}{P(y_3|x_{13}(1,1,1),x_{23}(1,1,1))}\right)^s
\end{align*}
for any $s>0$. Now, let $\vartheta$ be the event that (\ref{Eq:erst}) holds for some $m_{12}\neq1$ and any $j,$ and $l$. Then for any
$0\leq\rho\leq1$, the probability of the event $\vartheta$ can be expressed as follows \cite{haykin2005crb, abramowitz1972hmf}.
\begin{align}\label{Eq:akhr}
P(\vartheta)&\leq\left(\sum_{\boldsymbol{i},\boldsymbol{j},\boldsymbol{l}}P(\vartheta_2)\right)^\rho  \nonumber \\
&=(|{\cal I}|-1)^\rho |{\cal J}|^\rho |{\cal L}|^\rho\left[\sum_{x_{11},u_3,x_{13},x_{23}}P(x_{11})P(x_{13})P(x_{23})
\left(\frac{P(y_1|x_{10})}{P(y_1|\tilde{x}_{10})}\right)^s\left(\frac{P(y_3|x_{13},x_{23})}{P(y_3|\tilde{x}_{13},\tilde{x}_{23})}\right)^s\right]^\rho
\end{align}
where $\boldsymbol{i}\in\{2,...,|{\cal I}|\}$, $\boldsymbol{j}\in\{1,...,|{\cal J}|\}$ and $\boldsymbol{l}\in\{1,...,|{\cal L}|\}$. Then, the probability of interest, $P_{E_4},$ has an upper bound:
\noindent
\begin{align*}
P_{E_4}\leq& \sum_{y_{13},x_{11},x_{13},x_{23}}P(y_1|\tilde{x}_{11})
P(y_3|\tilde{x}_{13},\tilde{x}_{23})P(\tilde{x}_{11})P(\tilde{x}_{13})
P(\tilde{x}_{23})P(\vartheta)
\end{align*}
 By combining the last two equations and by choosing $s=1/(1+\rho)$, $P_{E_4}$ can be written as
\begin{align*}
P_{E_4}\leq&(|{\cal I}|-1)^\rho |{\cal J}|^\rho |{\cal L}|^\rho
\sum_{y_{13}}\left[\sum_{x_{11}}P(\tilde{x}_{11})(P(y_1|\tilde{x}_{11}))^{\frac{1}{1+\rho}}\right]^{1+\rho}
\left[\sum_{x_{13},x_{23}}P(\tilde{x}_{13},\tilde{x}_{23})(P(y_3|\tilde{x}_{13},\tilde{x}_{23}))^{\frac{1}{1+\rho}}\right]^{1+\rho}
\end{align*}
Since the channel is memoryless, $P_{E_4}$ can be expanded as follows.
\begin{align}\label{Eq:errd}
P_{E_4}\leq\; &(|{\cal I}|-1)^\rho |{\cal J}|^\rho |{\cal L}|^\rho
\sum_{y_{1,1}^{\alpha_1n},y_{3,\alpha_d}^{n}}
\left[\sum_{x_{11,1}^{\alpha_1n}}\prod_{t=1}^{\alpha_1 n}P(x_{11_t})(P(y_{1_t}|x_{11_t}))^{\frac{1}{1+\rho}}\right]^{1+\rho} \nonumber \\
&\left[\sum_{x_{13,\alpha_d}^{n},x_{23,\alpha_d}^n}\prod_{i=\alpha_d}^{n}
P(x_{13_t},x_{23_t})(P(y_{3_t}|x_{13_t},x_{23_t}))^{\frac{1}{1+\rho}}\right]^{1+\rho}
\end{align}
Then, by interchanging the order of the products and the summations, (\ref{Eq:errd}) can be simplified to:
\noindent
\begin{align*}
P_{E_4}\leq & (|{\cal I}|-1)^\rho |{\cal J}|^\rho |{\cal L}|^\rho
\prod_{t=1}^{\alpha_1 n}\sum_{y_{1_t}}\left[\sum_{x_{11_t}}P(x_{11_t})(P(y_{1_t}|x_{11_t}))^{\frac{1}{1+\rho}}\right]^{1+\rho}\\
&\prod_{t=\alpha_d}^{n}\sum_{y_{3_t}}\left[\sum_{x_{13_t},x_{23_t}}P(x_{13_t},x_{23_t})
(P(y_{3_t}|x_{13_t},x_{23_t}))^{\frac{1}{1+\rho}}\right]^{1+\rho}
\end{align*}
 Now, since the summations are taken over the inputs and the output alphabets, $P_{E_4}$ can be expressed as follows.
\begin{align}\label{Eq:erfth}
P_{E_4}\leq\;& (|{\cal I}|-1)^\rho |{\cal J}|^\rho |{\cal L}|^\rho
\left(\sum_{y_{1_t}}\left[\sum_{x_{11_t}}P(x_{11_t})(P(y_{1_t}|x_{11_t}))^{\frac{1}{1+\rho}}\right]^{1+\rho}\right)^{\alpha_1 n} \nonumber \\
&\left(\sum_{y_{3_t}}\left[\sum_{x_{13_t},x_{23_t}}P(x_{13_t},x_{23_t})
(P(y_{3_t}|x_{13_t},x_{23_t}))^{\frac{1}{1+\rho}}\right]^{1+\rho}\right)^{\alpha_3n}
\end{align}
Following \cite{haykin2005crb}, $({\cal I}-1){\cal J}{\cal L}$, has the following upper bound:
\begin{align*}
(|{\cal I}|-1)|{\cal J}||{\cal L}|
< 2^{n\left(R_{12}+R_{10}+R_{20}+\frac{2^{-n(R_{10}+R_{20})}}{(\text{ln}2)n}\right)}
\end{align*}
Finally, the bound of $P_{E_4}$ can be expressed as follows.
\begin{align*}
&P_{E_4}\leq 2^{-n\left[\Psi(\rho,P_{16})-\rho(R_{12}+R_{10}+R_{13}+R_{23})\right]},\;\;\text{where},\\
&\Psi(\rho,P_4)=-\left(\alpha_1 \text{log}(q_1)+\alpha_3\text{log}(q_2)+\frac{2^{-n(R_{10}+R_{20})}\rho}{(\text{ln}2)n}\right)\\
&q_1=\sum_{y_{1_t}}\left[\sum_{x_{11_t}}P(x_{11_t})(P(y_{1_t}|x_{11_t}))^{\frac{1}{1+\rho}}\right]^{1+\rho},
q_2=\sum_{y_{3_t}}
\left[\sum_{x_{13_t},x_{23_t}}P(x_{13_t},x_{23_t})(P(y_{3_t}|x_{13_t},x_{23_t}))^{\frac{1}{1+\rho}}\right]^{1+\rho}
\end{align*}
 Now, it can be easily verified that $\Psi(\rho,P_4)|_{\rho=0}=0$. Also, it can be shown that:
\begin{align*}
\left.\frac{d \Psi(\rho,P_4)}{d\rho}\right|_{\rho=0}=&\;\alpha_1 I(X_{11};Y_1)+\alpha_3I(X_{13},X_{23};Y_3)-\frac{2^{-n(R_{10}+R_{20})}}{(\text{ln}2)n}
\end{align*}
Hence, it can be easily noted that $P_{E_4}\rightarrow 0$ as $n\rightarrow \infty$ if:
\begin{align}\label{m73}
\!\!\!\!R_{12}+R_{10}+R_{20}\leq&\alpha_1 I(X_{11};Y_1)+\alpha_3 I(X_{13},X_{23};Y_3)
\end{align}
The other error events in this group can be analyzed similarly. Hence, $(P_{E_5},P_{E_6})\rightarrow 0$ as
$n\rightarrow \infty$ if:
\begin{align}\label{m74}
R_{21}+R_{10}+R_{20}\leq& \alpha_2 I(X_{22};Y_2)+\alpha_3I(X_{13},X_{23};Y_3)\\
R_{12}+R_{21}+R_{10}+R_{20}\leq& \alpha_1 I(X_{11};Y_1)+\alpha_2 I(X_{22};Y_2)
+\alpha_3I(X_{13},X_{23};Y_3)\nonumber
\end{align}
The rate constraints obtained from the first error group are
\begin{align}\label{m71}
R_{10}&\leq \alpha_3 I(X_{13};Y_3|U_3,X_{23}),\;\;
R_{20}\leq \alpha_3 I(X_{23};Y_3|U_3,X_{13}),\;\;
R_{10}+R_{20}&\leq\alpha_3I(X_{13},X_{23};Y_3|U_3).
\end{align}
Finally, by applying the rate constraints in (\ref{m74}) and (\ref{m71}) into the Gaussian channel in (\ref{recsig}) with the signaling in (\ref{sigtr}), we obtain the achievable rate
region given in Theorem \ref{nathr1}. Then, we take the expectation to incorporate the randomness of the fading channel.
\section{Proof of the Occurrence Probability of Case $1$ \ref{nathr1}}
The probability of case $1$ can be simply derived as follows. First since the Rayleigh distribution extends from  $0$ to $\infty$ and the channel parameters are independent, the occurrence probability can be expressed as
\begin{align}
P[g_{12}\leq g_{10},\;g_{21}\leq g_{20}]&=P[g_{12}^2\leq g_{10}^2,\;g_{21}^2\leq g_{20}^2]=P[g_{12}^2\leq g_{10}^2]P[g_{21}^2\leq g_{20}^2]
\end{align}
Then,
\begin{align}
P[g_{12}^2\leq g_{10}^2]&=P[g_{12}^2\leq \mu,\; g_{10}^2=\mu]=\int_0^{\infty} P(g_{12}^2\leq \mu) P(g_{10}^2=\mu) \partial \mu\nonumber\\
&=\int_0^{\infty} (1-\exp\{\frac{-\mu}{\overline{g_{12}^2}}\})\cdot \frac{1}{\overline{g_{10}^2}}\exp\{\frac{-\mu}{\overline{g_{10}^2}}\} \partial \mu
=\frac{\mu_{12}}{\mu_{10}+\mu_{12}}
\end{align}
$P[g_{21}^2\leq g_{20}^2]$ can be obtained in a similar way. Then, $P[g_{12}\leq g_{10},\;g_{21}\leq g_{20}]$ is given as in (\ref{prob1}).
\bibliographystyle{IEEEtran}
\bibliography{references}
\end{document}